\newtheorem*{theorem*}{Theorem}
\newtheorem{theorem}{Theorem}
\newtheorem{proposition}{Proposition} 
\newtheorem{lemma}{Lemma} 
\newtheorem{example}{Example}
\def\BibTeX{{\rm B\kern-.05em{\sc i\kern-.025em b}\kern-.08em
    T\kern-.1667em\lower.7ex\hbox{E}\kern-.125emX}}
\begin{document}

\title{Reconstruction and Error-Correction Codes for Polymer-Based Data Storage\\
{%\footnotesize \textsuperscript{*}Note: Sub-titles are not captured in Xplore and should not be used
}
\thanks{The work was funded by the DARPA Molecular Informatics program, the SemiSynBio program of the NSF and SRC, and the
NSF CIF grant 1618366.}
}

\author{\IEEEauthorblockN{Srilakshmi Pattabiraman}
\IEEEauthorblockA{\textit{ECE Department, UIUC} \\
%\textit{Univ. of Illinois, Urbana-Champaign}\\
Urbana, IL, USA \\
sp16@illinois.edu}
\and
\IEEEauthorblockN{Ryan Gabrys}
\IEEEauthorblockA{\textit{ECE Department, UCSD} \\
%\textit{University of California, San Diego}\\
San Diego, CA, USA \\
ryan.gabrys@gmail.com}
\and
\IEEEauthorblockN{Olgica Milenkovic}
\IEEEauthorblockA{\textit{ECE Department, UIUC} \\
%\textit{Univ. of Illinois, Urbana-Champaign}\\
Urbana, IL, USA \\
milenkovic@illinois.edu}
}

\maketitle

\begin{abstract}
Motivated by polymer-based data-storage platforms that use chains of binary synthetic polymers as the recording media and read the content via tandem mass spectrometers, we propose a new family of codes that allows for unique string reconstruction and correction of one mass error. Our approach is based on introducing redundancy that scales logarithmically with the length of the string and allows for the string to be uniquely reconstructed based only on its erroneous substring composition multiset. 
The key idea behind our unique reconstruction approach is to interleave Catalan-type paths with arbitrary binary strings and ``reflect'' them so as to allow prefixes and suffixes of the same length to have different weights. For error correction, we add a constant number of bits that provides information about the weights of reflected pairs of bits and hence enable recovery from a single mass error. The asymptotic code rate of the scheme is one, and decoding is accomplished via a simplified version of the backtracking algorithm used for the Turnpike problem.
\end{abstract}

\begin{IEEEkeywords}
Composition errors; Polymer-based data storage; String reconstruction.
\end{IEEEkeywords}

\vspace{-0.1in}
\section{Introduction}
Current digital storage systems are facing numerous obstacles in terms of scaling the storage density and allowing for in-memory based computations~\cite{zhirnov2016nucleic}. To offer storage densities at nanoscale, several molecular storage paradigms have recently been put forward in %roy2015design,
~\cite{al2017mass,goldman2013towards,grass2015robust,yazdi2015rewritable,yazdi2017portable}. One promising line of work with low storage cost and readout latency is the work in~\cite{al2017mass}, which proposed using synthetic polymers for storing user-defined information and reading the content via tandem mass spectrometry (MS/MS) techniques. More precisely, binary data is encoded using poly(phosphodiester)s, synthesized through automated phosphoamidite chemistry in such a way that the two bits $0$ and $1$ are represented by molecules of different masses that are stitched together into strings of fixed length. To read the encoded data, inter phosphate bonds are broken, and MS/MS readers are used to estimate the masses of the fragmented polymer and reconstruct the recorded string, as illustrated in Figure~\ref{fig:spectrometry}.
\begin{figure}[h]
\centering
\includegraphics[scale=0.3]{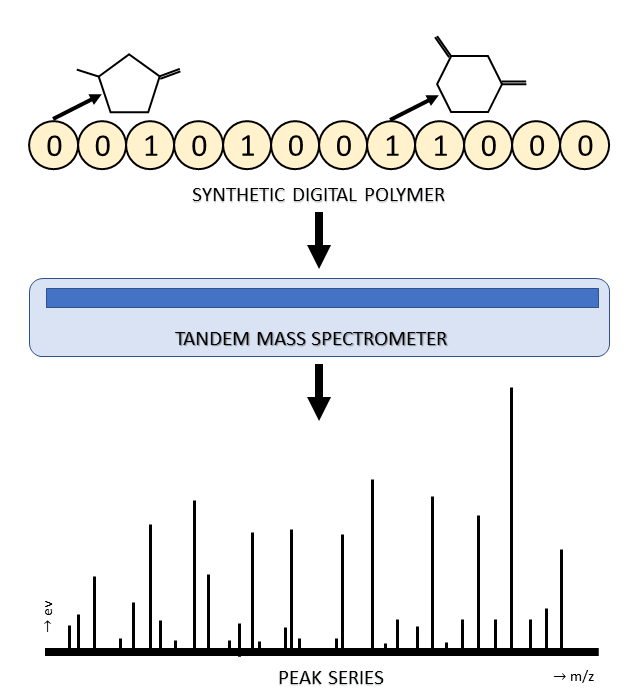} 
\caption{The scheme is adapted from~\cite{al2017mass}. The top figure depicts a binary string synthesized using phosphoamide chemistry. The bottom image is an illustration of \emph{peak series} or MS Spectrum obtained by MS/MS readout of the digital polymer. Note that in ideal conditions, the peaks are supposed to correspond to the masses of string fragments, or more precisely, masses of prefixes and suffixes of the string. Due to measurement errors, spurious peaks arise and one needs to apply specialized signal processing techniques to identify the correct peaks.}
\label{fig:spectrometry}
\vspace{-0.18in}
\end{figure}
Ideally, the masses of all prefixes and suffixes are recovered reliably, allowing one to read the message content by taking the differences of the increasing fragment masses and mapping them to the masses of the $0$ or $1$ symbol. Polymer synthesis is cost- and time-efficient and MS/MS sequencers are significantly faster than those designed for other macromolecules, such as DNA.
Nevertheless, despite the fact that the masses of the polymers can be tuned to allow for more accurate mass discrimination, polymer-based storage systems still suffer from large read error-rates. This is due to the fact that MS/MS sequencing methods tend to produce peaks, representing the masses of the fragments that are buried in analogue noise due to atom disassociation during the fragmentation process.

In an earlier line of work, the authors of~\cite{acharya2014string} introduced the problem of \emph{binary string reconstruction from its substring composition multiset} to address the issue of MS/MS readout analysis. The substring composition multiset of a binary string is obtained by writing out all substrings of the string of all possible length and then representing each substring by its composition. As an example, the string $101$ contains three substrings of length one - $1$, $0$, and $1$, two substrings of length 2 - $10$ and $01$, and one substring of length three - $101$. The composition multisets of the substrings of length one are $\{ 0,1,1 \}$, of length two are $\{0^11^1,0^11^1\}$ and of length three $\{0^11^2\}$. Note that composition multisets ignore information about the actual order of the bits and may hence be seen as only capturing the information about the ``mass'' or ``weight'' of the string. The problem addressed in~\cite{acharya2014string} was to determine for which string lengths may one guarantee unique reconstruction from an error-free composition multiset up to string reversal. The main results of~\cite[Theorem ~17, ~18, ~20]{acharya2014string} asserts that binary strings of length  $\leq 7$, one less than a prime, or one less than twice a prime are uniquely reconstructable up to reversal.

For our line of work, we also rely on the two modeling assumptions described in~\cite{acharya2014string}:

\textit{Assumption 1.} We can infer the composition of a polymer substring from its mass. As long as the masses chosen for $0$ and $1$ are distinct, and the polymer block length is fixed, this assumption is naturally satisfied. 

\textit{Assumption 2.} When a polymer block is broken down for mass spectrometry analysis, we observe the masses of all its substrings with identical frequency. The masses of all binary substrings of an encoded polymer may be abstracted by the composition multiset of a string, provided that Assumption 1 holds. This assumption deviates from the classical ion series theory in so far that the former only provides information about the masses of the prefixes and suffixes, while the abstraction allows one to observe the masses of all substrings, but without a priori knowledge of their order. 

Unlike the work in~\cite{acharya2014string} which has solely focused on the problem of unique string reconstruction, 
we view the problem from a coding-theoretic perspective and ask the following:

\textbf{Q1.} \emph{Can one add asymptotically negligible redundancy to information strings in such a way that unique reconstruction is possible, independent on the length of the strings?} Since only strings of specific lengths are reconstructable up to reversals, we aim to devise an efficient coding scheme that encode all strings of length $k \geq 1$ into strings of a larger length $n \geq k$ that are uniquely and efficiently reconstructable for \emph{all possible string lengths}. Furthermore, we do not allow for both a string and its reversal to be included in the codebook. One simple (non-constructive) means to ensure that a string is uniquely reconstructable up to reversal is to pad the string with bits up to the shortest length of the form $\min\{{p-1,2q-1\}}$, where $p$ and $q$ primes. For example, if $k > 89693$, it is known that there exists a prime $p$ such that  
$k-1 < p-1 < \left( 1+\frac{1}{\ln^3\, k}\right)\,k-1.$
Unfortunately, the result only holds for very large $k$ that are beyond the reach of polymer chemistry. Bertrand's postulate~\cite{hardy1929introduction}, applies for shorter lengths $k>3$, but only guarantees that $k-1 < p-1 < 2k-4.$ This implies a possible code rate reduction to $1/2$. Also, eliminating reversals of strings reduces the codebook by less than a half. 

\textbf{Q2.} \emph{Can one add asymptotically negligible redundancy to information strings in such a way that unique reconstruction is possible even in the 
presence of errors, independent on the length of the strings?} For simplicity, we focus on the single deletion-insertion error model, under which the composition (mass) of one substring is erroneously interpreted as a different composition (mass). 

We answer both questions affirmatively by describing a coding scheme that allows for unique reconstruction and correction of a single deletion-insertion mass error. Encoding is performed by interleaving symmetric strings with Catalan-type paths, while decoding is accomplished through a modification of the backtracking decoding algorithm described in~\cite{acharya2014string}. Our work extends the existing literature in coded string reconstruction~\cite{kiah2016codes,gabrys2018unique}.

\vspace{-0.1in}
\section{Problem Statement} \label{ps}
\vspace{-0.05in}
Let $\textbf{s}=s_1 s_2 \ldots s_k$ be binary a string of length $k \geq 2$. A substring of $\textbf{s}$ starting at $i$ and ending at $j$, where $1 \leq i < j \leq k,$ is denoted by $\textbf{s}_{i}^{j}$, and is said to have \emph{composition} $0^{z}1^{w}$, where $0 \leq z,w \leq j-i+1$ stand for the number of $0$s and $1$s in the substring, respectively. Note that the composition only conveys information about the weight of the substring, but not the particular order of the bits. Furthermore, let $C_{\ell}(\textbf{s})$  stand for the multiset of compositions of substrings of $\textbf{s}$ of length $\ell$, $1\leq \ell \leq k$; clearly, this multiset contains $k-\ell+1$ compositions. For example, if $\textbf{s}=100101$, then the substrings of length two are $10,00,01,10,01$, so that $C_2(\textbf{s})=\{{0^11^1,0^2,0^11^1,0^11^1,0^11^1\}}$. =

The multiset $C(\textbf{s})=\cup_{\ell=1}^{k} C_{\ell}(\textbf{s})$ is termed the \emph{composition multiset}. Clearly, the composition multisets of a string $\textbf{s}$ and its reversal, $\textbf{s}^r=s_k s_{k-1} \ldots s_1$ are identical and hence these two strings are indistinguishable based on $C(\cdot)$.
We define the \emph{cummulative weight} of a composition multiset $C_\ell(\textbf{s}),$ with compositions of the form $0^{z}1^{w}$, where $z+w=\ell$, as $w_{\ell}(\textbf{s})=\sum_{0^{z}1^{w} \in C_{\ell}(\textbf{s})}\, w.$ Observe that $w_{1}(\textbf{s})=w_{k}(\textbf{s})$, as both equal the weight of the string $\textbf{s}$. 
More generally, one also has $w_{\ell}(\textbf{s})=w_{k-\ell+1}(\textbf{s}), \text{ for all } 1 \leq \ell \leq k.$
In our subsequent derivations, we also make use of the following notation. For a string $\textbf{s}=s_1 s_2 \ldots s_k$, we let $\sigma_i=\text{wt}(s_is_{k-i+1})$ for $i \leq \lfloor \frac{n}{2} \rfloor,$ and $\sigma_{ \lceil \frac{n}{2} \rceil}=\text{wt}(s_{ \lceil \frac{n}{2} \rceil})$, where $\text{wt}$ stands for the weight of the string. For our running example $\textbf{s}=100101,$ $\sigma_1=2,$ while $\sigma_2=0$. We use $\Sigma^{ \lceil \frac{n}{2} \rceil}$ to denote the set $\{ \sigma_i\}_{i \in [ \lceil \frac{n}{2} \rceil]},$ where $[a]=\{{1,\ldots,a\}}$.

Whenever clear from the context, \emph{we omit the argument $\textbf{s}$ and the floors/ceiling functions required to obtain appropriate integer lengths.}

The two problem of interests are as follows. The first problem pertains to reconstruction codes: a collection of binary strings of fixed length is called a \textbf{reconstruction code} if all the strings in the code can be reconstructed uniquely based on their multiset compositions. We seek reconstruction codes of small redundancy and consequently, large rate.

In the second problem, one is given a valid composition multiset of a string $\textbf{s}$, $C(\textbf{s})$. Within the multiset $C(\textbf{s})$, only one composition is arbitrarily corrupted. We refer to such an error as a \textbf{single composition error}, or single insertion-deletion pair. For example, when $\textbf{s}=100101$, the multiset 
$C_2(\textbf{s})=\{{0^11^1,0^2,0^11^1,0^11^1,0^11^1\}}$ may be corrupted to $C_2(\textbf{s})=\{{\mathbf{0^2},0^2,0^11^1,0^11^1,0^11^1\}}$. 
Single composition errors for strings of even length are detectable, since if an error occurs in only one of the two sets $C_{\ell}$ or $C_{k+1-\ell}$, then $w_{\ell} \neq w_{k+1-\ell}$. We seek reconstruction codes capable of correcting one composition error. 

Our main results are summarized below. 
\begin{theorem}
There exist efficiently encodable and decodable reconstruction codes with information string length $k$ and redundancy at most $ \frac{1}{2}\,\log \,(k) +6$. 
\end{theorem}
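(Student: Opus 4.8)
The plan is to exhibit an explicit code $\mathcal{C}\subseteq\{0,1\}^n$ with $n=k+\lceil\tfrac12\log k\rceil+O(1)$ on which the backtracking reconstruction algorithm of~\cite{acharya2014string} runs \emph{without branching}, together with an enumerative encoder realizing an injection $\{0,1\}^k\hookrightarrow\mathcal{C}$. Unique reconstruction of every $\textbf{s}\in\mathcal{C}$ from $C(\textbf{s})$, plus the fact that no codeword's reversal lies in $\mathcal{C}$, then gives the theorem.

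\emph{Decoder analysis.} For a string $\textbf{s}$ of length $n$ write $p_t=\weight(s_1\cdots s_t)$ and $q_t=\weight(s_{n-t+1}\cdots s_n)$ for the weights of the length-$t$ prefix and suffix, and set $e_t=q_t-p_t=\sum_{i=1}^{t}(s_{n+1-i}-s_i)$. The first step is to show that from $C(\textbf{s})$ one can recover, for $t=1,2,\dots$ in turn, the two-element multiset $\{p_t,q_t\}$: the set $C_{n-t}$ consists of the compositions of the $t+1$ substrings obtained by deleting a prefix of length $a$ and a suffix of length $t-a$, $0\le a\le t$; the $t-1$ of these with $1\le a\le t-1$ involve only prefixes and suffixes of length $<t$, already recovered, so they are computable and can be cancelled from $C_{n-t}$, leaving exactly $\{w-q_t,\,w-p_t\}$ (here $w=w_1(\textbf{s})$). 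Given $\{p_t,q_t\}$, the only remaining question at step $t$ is which of the two values is $p_t$; since $p_t\in\{p_{t-1},p_{t-1}+1\}$ and $q_t\in\{q_{t-1},q_{t-1}+1\}$, a short case check (split on whether $|q_{t-1}-p_{t-1}|\ge 2$ or $=1$) shows this assignment --- equivalently the pair $(s_t,s_{n+1-t})$ --- is \emph{forced} whenever $e_{t-1}\neq 0$, and is the \emph{only} possible source of a branch when $e_{t-1}=0$.

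\emph{The code, its rate, and encoding.} Define $\mathcal{C}=\{\textbf{s}\in\{0,1\}^n:\ e_t(\textbf{s})>0\ \text{for all }1\le t\le\lfloor n/2\rfloor\}$ (with the middle coordinate unconstrained when $n$ is odd). For $\textbf{s}\in\mathcal{C}$ one has $e_{t-1}\ge 1$ for every $t\ge 2$, so every step $t\ge 2$ of the backtracking is forced; at $t=1$ one has $\{p_1,q_1\}=\{0,1\}$ and a genuine two-way choice, but its two continuations produce precisely $\textbf{s}$ and $\textbf{s}^r$, and since $e_t(\textbf{s}^r)=-e_t(\textbf{s})<0$ we have $\textbf{s}^r\notin\mathcal{C}$, so taking the branch with $e_1>0$ reconstructs $\textbf{s}$ uniquely. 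Thus $\mathcal{C}$ is a reconstruction code. Identifying $\textbf{s}\in\mathcal{C}$ with the sequence of reflected pairs $(s_i,s_{n+1-i})$, $i=1,\dots,m:=\lfloor n/2\rfloor$ --- each contributing a step $+1$ (pair $01$), $-1$ (pair $10$), or $0$ (pair $00$ or $11$, two ``colours'') to the walk $e$ --- puts $\mathcal{C}$ in bijection with strictly positive Catalan/Motzkin-type paths of length $m$ with two-coloured level steps (times $2$ when $n$ is odd). A reflection/cycle-lemma estimate, or a generating-function computation, gives that the number of such paths is $\Theta(4^{m}/\sqrt m)$, hence $\log_2|\mathcal{C}|\ge n-\tfrac12\log n-O(1)$; choosing $n$ to be the least integer with $n-\tfrac12\log n-O(1)\ge k$ makes $|\mathcal{C}|\ge 2^{k}$ with $n-k\le\tfrac12\log k+6$. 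Finally, since $\mathcal{C}$ is cut out by a prefix-sum condition, the completion counts $f(i,h)$ (number of ways to finish the step-sequence from position $i$ at current height $h$) satisfy an $O(m^2)$-size recursion; precomputing $f$ and applying standard enumerative (arithmetic) coding yields an efficient bijection between $k$-bit messages and valid paths, and the decoder is the branch-free backtracking above, running in time polynomial in $|C(\textbf{s})|$.

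\emph{Main obstacle.} The conceptual heart is the decoder analysis of the second paragraph: proving that $\{p_t,q_t\}$ is cleanly recoverable at each step after cancelling the internal compositions, and that the single inequality $e_t>0$ suppresses every branch of the backtracking except the harmless $t=1$ reversal branch. The lattice-path count and the enumerative-coding argument are routine, with the only care needed being to track the constants tightly enough for the redundancy to come out as $\tfrac12\log k+6$.
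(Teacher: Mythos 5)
Your proposal is correct and follows essentially the same route as the paper: your condition $e_t(\textbf{s})>0$ for all $t\le \lfloor n/2\rfloor$ is exactly the paper's interleaving of a Catalan-type path (on the positions where $s_i\neq s_{n+1-i}$) with arbitrary symmetric pairs, your lattice-path count reproduces the paper's $\Theta(2^n/\sqrt{n})$ lower bound on the codebook size, and your branch-free backtracking decoder is the paper's decoder. The only difference is that you prove the forced-step property (that the pair $(s_t,s_{n+1-t})$ is determined whenever the length-$(t-1)$ prefix and suffix have unequal weights) from scratch, whereas the paper imports it from Lemma~31 of the cited reconstruction paper.
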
 

\begin{theorem}
There exist efficiently encodable and decodable reconstruction code with information string length $k$ capable of correcting a single composition error and redundancy at most $\frac{1}{2} \log\,(k) +9$. 
\end{theorem}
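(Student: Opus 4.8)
The plan is to bootstrap from the reconstruction code of Theorem~1, spending only a constant number of extra redundant bits. A single composition error, being one deletion-insertion pair, changes one composition $0^{z}1^{w}$ into $0^{z\mp1}1^{w\pm1}$ inside exactly one level, so it perturbs exactly one cumulative weight $w_{\ell_0}$ by $\pm1$ and leaves every other $w_\ell$ intact. Since Theorem~1's codewords can be taken of even length, the argument in Section~\ref{ps} makes this always detectable: from the received multiset $\tilde C$ the decoder sees a unique level $m$ (namely $\min\{\ell_0,k+1-\ell_0\}$) with $w_m\neq w_{k+1-m}$, and it knows that one of $C_m, C_{k+1-m}$ is clean while the other is off by a $\pm1$ swap --- so it is left with exactly two hypotheses for the true common value of that cumulative weight.

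The encoder therefore maps the information string to its Theorem~1 codeword and appends a constant-length ``weight sketch'': the residue, modulo a fixed small integer, of a fixed \emph{quadratic} weighted sum $\sum_i a_i\,\sigma_i$ of the reflected-pair weights $\Sigma=(\sigma_1,\sigma_2,\dots)$ of that codeword, plus one parity bit for the boundary cases mentioned below. The reason to use quadratic $a_i$ is the identity that the cumulative weights determine $\Sigma$ through a discrete second difference, $\sigma_\ell=w_{\ell-1}-2w_\ell+w_{\ell+1}$ away from the ends; hence changing $w_m$ by $\pm1$ changes the recomputed $\Sigma$ only at coordinates $m-1,m,m+1$ by the pattern $\pm(1,-2,1)$, and therefore changes $\sum_i a_i\sigma_i$ by exactly $\pm(a_{m-1}-2a_m+a_{m+1})$ --- a \emph{nonzero constant} when $a$ is quadratic, independent of where $m$ lies. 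A fixed small modulus thus always separates the two hypotheses. This costs at most $3$ bits beyond Theorem~1, bringing the total redundancy to at most $\tfrac12\log k+9$.

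Decoding, in polynomial time, proceeds as follows. From $\tilde C$ compute all $w_\ell$ and all $|\tilde C_\ell|$; if $w_\ell=w_{k+1-\ell}$ for every $\ell$ and all cardinalities are correct, declare ``no error'' and run the Theorem~1 decoder. Otherwise identify the unique affected level $m$. The two hypotheses fix the true value of the ambiguous cumulative weight and hence, via the second-difference relation, two candidate profiles $\Sigma^{(1)},\Sigma^{(2)}$ that differ only in an $O(1)$-size window at $m$ by the pattern $\pm(1,-2,1)$. Discard any candidate whose profile is not a valid Catalan/Dyck-shaped sequence of entries in $\{0,1,2\}$ (often this already eliminates one of them), and among the survivors keep the one whose quadratic sketch matches the appended residue; by the displayed identity this choice is unambiguous. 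With the correct $\Sigma$ and all $w_\ell$ now known, run the simplified backtracking reconstruction of Theorem~1 on the clean multisets; at the $O(1)$ steps that would have consulted the corrupted level, supply the required prefix/suffix weights from the recovered $\sigma$-values and from the clean partner level, and resolve the within-pair order exactly as in the error-free case. Finally, regenerate the composition multiset of the output and verify it differs from $\tilde C$ in at most one composition.

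The main obstacle is the disambiguation argument: proving that for every codeword and every single composition error the two candidate profiles cannot both be admissible --- simultaneously Dyck-shaped, arising from a genuine codeword, and consistent with the sketch. The quadratic-sketch identity settles the generic interior case, but one must separately handle (i) small $m$, where the corrupted weight is $w_1$ or $w_2$ and the clean second-difference formula degenerates (here the extra parity bit, or the already-decoded outer bits of the Dyck prefix, is used), (ii) $m$ near $k/2$, where the ``partner'' level is adjacent to or coincides with $m$ and the propagation windows overlap, and (iii) the variant where the misread mass matches a substring of a different apparent length, which shows up directly as a wrong cardinality $|\tilde C_\ell|\neq k-\ell+1$ and is in fact localized more sharply. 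Making the various small constants explicit so that the total lands exactly at $\tfrac12\log k+9$ is where the remaining bookkeeping concentrates; the even-length adjustment, the backtracking ``bypass'' around the single bad level, and the final consistency check are routine given Theorem~1 and \cite{acharya2014string}.
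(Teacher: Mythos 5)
Your mechanism for recovering $\Sigma^{n/2}$ is essentially the paper's, just rephrased: the parity bit pins down $w_1$ (Proposition~\ref{prop:w1}), and your ``quadratic sketch'' $\sum_i a_i\sigma_i \bmod 3$ is, after the substitution $w_i=\sum_j \min(i,j)\,\sigma_j$ from~(\ref{eq:sigmas}), the same object as the paper's constraint $\sum_{i\leq n/2} w_i\equiv 0 \bmod 3$ (the coefficient of $\sigma_j$ in that sum is quadratic in $j$, so its second difference is a nonzero constant --- exactly your argument). One caveat: the sketch must be enforced as a global congruence satisfied by the codeword, as the paper does by fixing $s^*_1$, $s^*_n$ and the middle bit, not ``appended'' as a readable field --- the decoder cannot read designated positions before it has reconstructed the string, which is what it is trying to do.

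The genuine gap is in the second half. You treat the step ``given the correct $\Sigma^{n/2}$ and the corrupted multiset, run backtracking'' as routine, but it is not, and it is where most of the paper's proof lives (Lemma~\ref{lem:3}). When backtracking has reconstructed $\textbf{s}_1^i$ and $\textbf{s}^n_{n+1-i}$ and $\sigma_{i+1}=1$, the order of the pair $(s_{i+1},s_{n-i})$ is decided from the prefix/suffix compositions at level $j=n-i-1>n/2$; if the single corrupted composition sits at exactly that level, the decision is ambiguous. Your proposed bypass --- ``supply the required weights from the clean partner level'' --- does not work: $C_{n+1-j}$ is a multiset of $j$ compositions of \emph{all} substrings of length $n+1-j$ with no identified correspondence to the prefix or suffix, so it does not reveal $\text{wt}(\textbf{s}_1^{j})$ or $\text{wt}(\textbf{s}_{i+2}^{n})$. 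Likewise, your final ``regenerate and verify'' check only certifies that the output is within one composition of $\tilde C$; it does not exclude a second codeword with the same property. What is needed, and what the paper proves by a case analysis on the bits at positions $i+1$, $i+2$ and their mirrors, is a minimum-distance statement: any two codewords sharing the same $\Sigma^{n/2}$ have composition multisets at symmetric-difference distance at least $4$. Without this lemma (or an equivalent), correctness of single-error decoding is unproved.
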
 
\vspace{-0.12in}
\section{Some Technical Background} \label{pb}
 \vspace{-0.04in}
Our codebook design relies on the backtracking algorithm~\cite{acharya2014string}, motivated by the Turnpike problem.
We provide an example illustrating the operation of the algorithm.

\begin{example} Let $\textbf{s} = 1010001010$. It can be shown that the set $\Sigma^5 = \{\sigma_1 = 1,\sigma_2 = 1,\sigma_3=1,\sigma_4=1,\sigma_5=0 \}$ is uniquely determined from the composition multiset. For example, $\sigma_1=1$ can be deduced from the two compositions of length $9$, $0^51^4$ and $0^61^3$. How to determine $\Sigma^{k/2}$ from the composition multiset will be discussed in more detail in the next section. Backtracking starts by determining the first and last bit of the string and then proceeding with inward bit placements. In our example, $s_1=1$ and $s_{10}=0$. From $\Sigma^5$, we easily see that one composition of length $8$ equals $0^51^3$; removing this set from $C_8$ allows us to determine $\{ \emph{wt}{(\textbf{s}_1^8}), \emph{wt}(\textbf{s}_{3}^{10}) \}$. Given $C$ and the previous information, we deduce that $s_2=0$ and $s_9=1$. Note that these values were determined correctly since $\emph{wt}(s_{1}) \not = \emph{wt}(s_{10})$. The same steps can be repeated iteratively, but in general, the algorithm will only be able to determine the compositions of the prefix/suffix extensions, but not their actual placement. This phenomenon can be observed in the next step, since the weights of the currently available prefix and suffix are equal. In this case, the algorithm makes an arbitrary assignment. For instance, the algorithm could make the assignments $\textbf{s}_1^3 = 100$ and $\textbf{s}_{8}^{10}=110$. Nevertheless, at some point, combining the information in $\Sigma^5$ with the current estimate of the prefix and suffix may produce an invalid composition. In this case, the algorithm backtracks to the first position at which an arbitrary assignment was made and reverses it. Thus, the algorithm will backtrack depending on the weights of the prefixes and suffixes of the same length.
\end{example}

\begin{theorem*} \emph{\cite[Theorem ~32]{acharya2014string}} 
Let $ \ell_s \overset{\text{def}}{=} | \{ i \leq n/2: \emph{wt}(\textbf{s}_1^i) = \emph{wt}(\textbf{s}_{n+1-i}^{n}) \text{ and } s_{i+1} \not  = s_{n-i} \} |, $
$ E_s \overset{\text{def}}{=} \{ \textbf{t}: C(\textbf{t})=C(\textbf{s})\},\; \ell_s^{*} \overset{\text{def}}{=} \max_{t \in E_s} \ell_{t}.
$ 
For a given input $C(\textbf{s})$ and $\ell_s$, the backtracking algorithm outputs a set of strings that contains $\textbf{s}$ in time $\mathcal{O}(2^{\ell_s}n^{2} \log\,(n))$. Furthermore, $E_s$ can be recovered in time $\mathcal{O}(2^{\ell^*_s}n^{2} \log\,(n))$.
\end{theorem*}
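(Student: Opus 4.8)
The plan is to prove the theorem in three stages: recovering the reflected‑pair data $\Sigma^{n/2}$ from $C(\mathbf{s})$; an inductive ``one layer at a time'' reconstruction step that pins down each pair $(s_{i+1},s_{n-i})$ up to at most one binary choice; and a counting argument bounding the induced backtracking tree. First I would observe that every cumulative weight $w_\ell(\mathbf{s})$ is read directly off $C(\mathbf{s})$. Counting, for fixed $\ell\le\lceil n/2\rceil$, how many length‑$\ell$ windows contain a position $j$ — the count is $\min(j,n+1-j,\ell)$ — gives $w_\ell-w_{\ell-1}=\mathsf{wt}(\mathbf{s}_\ell^{\,n-\ell+1})$, and taking a second difference yields $\sigma_\ell=2w_\ell-w_{\ell-1}-w_{\ell+1}$ (with $w_0:=0$; the same identity covers the middle coordinate when $n$ is odd). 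Thus the whole sequence $(\sigma_\ell)$, hence $\Sigma^{n/2}$, is a function of $C(\mathbf{s})$, computable with $\mathcal{O}(n)$ additions.

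For the reconstruction step, fix a layer $i$ and suppose $\mathbf{s}_1^{i}$ and $\mathbf{s}_{n+1-i}^{n}$ are known. Among the length‑$(n-i-1)$ windows, all but the two extreme ones have their complement inside the known prefix/suffix of length $\le i$, so each such weight is $\mathsf{wt}(\mathbf{s})$ minus two known numbers and is computable; deleting these $i$ values from $C_{n-i-1}(\mathbf{s})$ leaves the two‑element multiset $\{\mathsf{wt}(\mathbf{s}_1^{\,n-i-1}),\mathsf{wt}(\mathbf{s}_{i+2}^{\,n})\}$, which by the same bookkeeping equals $\{A-s_{n-i},\,B-s_{i+1}\}$ for integers $A,B$ fixed by $\mathsf{wt}(\mathbf{s})$ and the current prefix/suffix weights. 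Combining this multiset with the constraint $s_{i+1}+s_{n-i}=\sigma_{i+1}$ from the first stage, a short case check on $|A-B|$ shows $(s_{i+1},s_{n-i})$ is uniquely determined whenever $A\ne B$ or $\sigma_{i+1}\in\{0,2\}$, and splits into exactly two mirror candidates — both consistent with $C_{n-i-1}(\mathbf{s})$ at this layer — precisely when $\mathsf{wt}(\mathbf{s}_1^{i})=\mathsf{wt}(\mathbf{s}_{n+1-i}^{n})$ and $s_{i+1}\ne s_{n-i}$, i.e.\ at the indices counted by $\ell_s$. Each layer also checks that the weights it asserts actually occur in $C(\mathbf{s})$ and aborts otherwise; the two initial bits are fixed from $\mathsf{wt}(\mathbf{s})$, $\sigma_1$, and the convention $s_1=1$ that quotients out reversal.

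Running this from the input and branching at the ambiguous layers yields a tree in which a root‑to‑node path makes a binary choice only where the current partial prefix and suffix have equal weight and opposite next bits. The all‑correct path recovers $\mathbf{s}$ and branches exactly $\ell_s$ times, so using the supplied value $\ell_s$ to prune any path with more than $\ell_s$ branches keeps $\mathbf{s}$ while leaving at most $2^{\ell_s}$ leaves; each leaf is reached in $\mathcal{O}(n)$ layers, and keeping $C(\mathbf{s})$ as an array of multiplicities indexed by weight (weights in $[0,n]$) makes each layer cost $\mathcal{O}(n)$ operations of $\mathcal{O}(\log n)$ time, so $\mathcal{O}(n^2\log n)$ per path and $\mathcal{O}(2^{\ell_s}n^2\log n)$ overall, proving the first claim. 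For $E_s$ one runs the same procedure with no cap: each $\mathbf{t}\in E_s$ is the endpoint of its own all‑correct path, which by the lemma branches $\ell_t\le\ell_s^*$ times, so the successful leaves are exactly $E_s$ and number at most $2^{\ell_s^*}$, yielding the stated $\mathcal{O}(2^{\ell_s^*}n^2\log n)$ bound once the dead‑end branches are accounted for.

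The main obstacle is that last clause — bounding the cost of \emph{incorrect} branches when no cap is available. After a wrong mirror choice the current prefix/suffix weights cease to track those of any string in $E_s$, and an inconsistency may surface only several layers later, so one must show that every surviving path, including a doomed one, accrues at most $\ell_s^*$ binary choices before it is killed. For viable configurations this is clean: if a layer‑$i$ configuration completes to $\mathbf{t}\in E_s$, each of its branch layers $j$ satisfies $\mathsf{wt}(\mathbf{t}_1^{j})=\mathsf{wt}(\mathbf{t}_{n+1-j}^{n})$ and $t_{j+1}\ne t_{n-j}$ and so is counted in $\ell_t\le\ell_s^*$; the work is to handle dead‑end paths, e.g.\ by showing a wrong choice immediately makes the two current weights unequal (hence the branch cannot re‑branch until corrected) or by injecting each dead‑end path into a nearby viable one. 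Establishing this branching bound, together with the routine bookkeeping behind the $n^2\log n$ factor, is where the real effort lies; the first two stages are essentially computations.
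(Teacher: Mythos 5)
The paper does not actually prove this statement: it is imported verbatim from \cite[Theorem~32]{acharya2014string}, and the surrounding text only illustrates the algorithm on the example $\textbf{s}=1010001010$ and later reuses its ingredients (the linear system \eqref{eq:sigmas} for $\Sigma^{n/2}$ and the cited Lemma~31). So there is no in-paper proof to compare against, and your proposal has to be judged as a reconstruction of the cited proof. Your first two stages are correct and match those ingredients: the identity $\sigma_\ell=2w_\ell-w_{\ell-1}-w_{\ell+1}$ is a cleaner (second-difference) derivation of what the paper obtains by solving the triangular system \eqref{eq:sigmas}, and your layer analysis --- deleting the $i$ computable ``interior'' weights from $C_{n-i-1}$ to isolate $\{A-s_{n-i},B-s_{i+1}\}$ and observing that ambiguity occurs exactly when $A=B$ and $\sigma_{i+1}=1$ --- is precisely the content of Lemma~31 and correctly characterizes the branch layers counted by $\ell_s$. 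The capped-search argument for the first claim is also sound: pruning any path after $\ell_s$ binary choices retains $\textbf{s}$ and caps the leaf count at $2^{\ell_s}$.

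The genuine gap is the one you flag yourself: the $\mathcal{O}(2^{\ell_s^*}n^2\log n)$ bound for recovering $E_s$. Here no cap is available ($\ell_s^*$ is defined via $E_s$ and is not an input), so you must bound the number of branch points on \emph{dead-end} paths, and neither of your suggested repairs closes this. A wrong mirror choice makes the running prefix and suffix weights differ by $1$ only at the immediately following layer; a single forced step with $\sigma_{i+2}=1$ can re-equalize them, after which the doomed path can branch again, so ``cannot re-branch until corrected'' is false as stated, and the injection of dead-end paths into viable ones is asserted rather than constructed. Without an argument that every surviving node of the computation tree accrues at most $\ell_s^*$ binary choices before a contradiction with $C(\textbf{s})$ or $\Sigma^{n/2}$ is detected, the claimed leaf bound $2^{\ell_s^*}$ --- and hence the second running-time claim --- is not established. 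This is exactly the part of \cite[Theorem~32]{acharya2014string} that carries the real difficulty, so as a standalone proof your proposal is incomplete there, even though everything you do prove is correct.
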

Clearly, if the string has a length that does not allow for unique reconstruction, the algorithm will return a set of strings and in the process backtrack multiple times. Backtracking is possible even when the string is uniquely reconstructable, and one condition that ensures non-backtracking is to impose the constraint that no prefix has a matching suffix of the same length and same weight. To see how such strings may be constructed, we introduce strings related to Catalan paths.

\begin{theorem}\text{\emph{(Bertrand [1887])}}
Among all strings comprising $a$ $0$s and $b$ $1$s, where $a \geq b$, there are ${a+b \choose a} - {a+b \choose a+1}$ strings in which every prefix has at least as many $0$s as $1$s. Note that when $a=b=h$, ${a+b \choose a} - {a+b \choose a+1} = \frac{1}{h+1} {2h \choose h} =C_h$. The number $C_h$ is known as the $h^{\text{th}}$ \emph{Catalan number}. The central binomial coefficient ${2h \choose h}$, among other things, also counts the number of strings of length $2h$ whose every prefix contains more $0$s than $1$s. We refer to such strings as \emph{Catalan-type}.
\end{theorem}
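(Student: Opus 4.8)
The plan is to prove this classical ballot-type identity via the reflection principle. First I would translate strings into lattice paths: to a string $s_1\cdots s_{a+b}$ with $a$ zeros and $b$ ones associate the walk whose $i$-th step is $+1$ when $s_i=0$ and $-1$ when $s_i=1$. Then the property ``every prefix has at least as many $0$s as $1$s'' is exactly ``every partial sum is nonnegative,'' i.e. the walk never drops below $0$. Since there are $\binom{a+b}{a}$ strings in total (choosing the positions of the $a$ zeros), it suffices to count the \emph{bad} strings, whose walk reaches $-1$, and show there are $\binom{a+b}{a+1}$ of them; subtracting then gives the formula.

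The heart of the argument is a bijection between bad strings and all strings with $a+1$ zeros and $b-1$ ones. Given a bad string, I would take the first index $i^\star$ at which the partial sum equals $-1$ and flip $0\leftrightarrow 1$ throughout the prefix $s_1\cdots s_{i^\star}$, leaving the suffix intact. Because that prefix stays $\ge 0$ until its last step and then dips to $-1$, it has exactly one more $1$ than $0$, so after flipping the whole string acquires one extra $0$ and loses one $1$, giving $a+1$ zeros and $b-1$ ones. For the inverse, note that a string with $a+1$ zeros and $b-1$ ones has total partial sum $(a+1)-(b-1)=a-b+2\ge 2$ — this is where the hypothesis $a\ge b$ is used — so its walk must first hit $+1$ at some index; flipping the prefix up to that index reverses the construction. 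This yields $\binom{(a+1)+(b-1)}{a+1}=\binom{a+b}{a+1}$ bad strings.

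The auxiliary assertions would then be short computations. When $a=b=h$, $\binom{2h}{h}-\binom{2h}{h+1}=\binom{2h}{h}-\binom{2h}{h-1}=\binom{2h}{h}\bigl(1-\tfrac{h}{h+1}\bigr)=\tfrac{1}{h+1}\binom{2h}{h}=C_h$. For the claim that $\binom{2h}{h}$ counts the Catalan-type strings of length $2h$ (those with every prefix having at least as many $0$s as $1$s), I would observe that such a string has some number $j\ge h$ of zeros, and summing the ballot count $\binom{2h}{j}-\binom{2h}{j+1}$ over $j=h,\dots,2h$ telescopes to $\binom{2h}{h}-\binom{2h}{2h+1}=\binom{2h}{h}$.

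The main obstacle is making the bijection airtight: I would need to check that the first passage to $-1$ in a bad string maps to the first passage to $+1$ in the reflected string (and conversely), so that applying the two reflections in succession is the identity. This reduces to observing that before the first hit of $-1$ the partial sums are $\ge 0$, hence after reflection they are $\le 0$ on that prefix and so never reach $+1$ before the distinguished index — and symmetrically for the reverse map. Everything else (the path encoding, the bookkeeping of symbol counts, and the two corollaries) is routine.
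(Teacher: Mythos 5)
Your argument is correct, and there is nothing in the paper to compare it against: the paper states this as Bertrand's classical ballot theorem and gives no proof. Your reflection-principle (prefix-flipping) bijection between the ``bad'' strings and the strings with $a+1$ zeros and $b-1$ ones is the standard proof, and your verification that the first passage to $-1$ corresponds to the first passage to $+1$ under the flip is exactly the point that makes the two maps mutually inverse; the use of $a\ge b$ to guarantee that every string with $a+1$ zeros and $b-1$ ones does reach $+1$ is also the right place to invoke the hypothesis. One remark: the theorem as printed says $\binom{2h}{h}$ counts the strings of length $2h$ in which every prefix has \emph{more} $0$s than $1$s; read literally (strict inequality) that count is $\binom{2h-1}{h-1}=\tfrac{1}{2}\binom{2h}{h}$, not $\binom{2h}{h}$. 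You silently substituted ``at least as many,'' which is the correct classical statement and is the reading consistent with how the paper later uses the count $\binom{i}{i/2}$ for Catalan-type strings in bounding $|\mathcal{S}_R(n)|$; your telescoping of the ballot numbers $\binom{2h}{j}-\binom{2h}{j+1}$ over $j=h,\dots,2h$ then correctly yields $\binom{2h}{h}$. So the only ``discrepancy'' is a wording slip in the paper, not a gap in your proof.
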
 

The following bounds on the central binomial coefficient will be useful in our subsequent derivations. 
\begin{proposition} \label{prop2}
\emph{The central binomial coefficient may be bounded as:}
\begin{equation}
    \frac{2^{2h}}{\sqrt{\pi h}} \left( 1- \frac{1}{8h}\right) \leq {2h \choose h} \leq \frac{2^{2h}}{\sqrt{\pi h}} \left( 1- \frac{1}{9h}\right),\; \forall \, h \geq 1.
\end{equation}
\end{proposition}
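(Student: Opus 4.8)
The plan is to normalize away the exponential and reduce everything to a statement about $b_h \define \binom{2h}{h}/4^h$, which satisfies $b_1 = \tfrac12$ and the one-step recursion $b_{h+1}/b_h = \tfrac{2h+1}{2h+2}$ (equivalently $b_h = \prod_{j=1}^h \tfrac{2j-1}{2j}$). Dividing the asserted chain of inequalities by $4^h$ and multiplying by $\sqrt{\pi h}$, it is equivalent to prove
\[
1 - \frac{1}{8h} \;\le\; \sqrt{\pi h}\,b_h \;\le\; 1 - \frac{1}{9h}, \qquad h \ge 1 .
\]
First I would record the limit $\sqrt{\pi h}\,b_h \to 1$. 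Writing $1 - \tfrac{1}{4j^2} = \tfrac{(2j-1)(2j+1)}{(2j)^2}$ one checks $\prod_{j=1}^h\bigl(1 - \tfrac{1}{4j^2}\bigr) = (2h+1)\,b_h^2$; the left-hand side is a decreasing product of factors in $(0,1)$ whose value tends to $2/\pi$ by Wallis' formula, so $(2h+1)b_h^2 \downarrow 2/\pi$, hence $\pi h\, b_h^2 \to 1$ and, since everything is positive, $\sqrt{\pi h}\,b_h \to 1$.

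The heart of the argument is a monotone-ratio comparison of $\sqrt{\pi h}\,b_h$ against the two candidate bounds. Put $g_h \define \dfrac{\sqrt{\pi h}\,b_h}{1 - 1/(8h)}$ and $G_h \define \dfrac{\sqrt{\pi h}\,b_h}{1 - 1/(9h)}$; both converge to $1$ by the previous paragraph. For the lower bound I would show $g_h$ is non-increasing, which then forces $g_h \ge \lim_h g_h = 1$. Using the recursion, $g_{h+1}/g_h = \dfrac{(2h+1)(8h-1)}{2h(8h+7)}\sqrt{\tfrac{h+1}{h}}$, and the inequality $g_{h+1}/g_h \le 1$ is equivalent (both sides are positive, so square) to $(h+1)(2h+1)^2(8h-1)^2 \le 4h^3(8h+7)^2$, which after expanding both degree-$5$ polynomials collapses to $8h^2 + 11h - 1 \ge 0$ --- true for every $h \ge 1$. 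Symmetrically, for the upper bound I would show $G_h$ is non-decreasing: $G_{h+1}/G_h \ge 1$ reduces in the same way to $(h+1)(2h+1)^2(9h-1)^2 \ge 4h^3(9h+8)^2$, i.e.\ to $9h^3 - h^2 - 13h + 1 \ge 0$. This holds for all $h \ge 2$ but is false at $h = 1$, so I would dispatch $h = 1$ by hand, where the required bound is $2 \le \tfrac{32}{9\sqrt{\pi}}$, i.e.\ $\pi \le 256/81 \approx 3.16$, which holds. Thus $G_h \le 1$ for $h \ge 2$ by monotonicity toward $1$, and $G_1 \le 1$ directly; together with $g_h \ge 1$ this yields the proposition.

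I expect the main obstacle to be that these bounds are genuinely tight --- for the upper bound the slack at $h = 1,2,3$ is well under one percent --- so no crude estimate will do, and the monotone-ratio device is just barely sharp enough. The real content is therefore (i) clearing denominators, squaring, and verifying the two resulting polynomial inequalities $8h^2 + 11h - 1 \ge 0$ and $9h^3 - h^2 - 13h + 1 \ge 0$, and (ii) noticing that the cubic dips below $0$ exactly at $h = 1$, which is why the single numerical base case $\pi < 256/81$ is unavoidable. An alternative route would substitute the Stirling expansion $n! = \sqrt{2\pi n}\,(n/e)^n e^{\lambda_n}$ with $\tfrac{1}{12n} - \tfrac{1}{360 n^3} < \lambda_n < \tfrac{1}{12n}$ into $\binom{2h}{h} = \tfrac{4^h}{\sqrt{\pi h}}\,e^{\lambda_{2h} - 2\lambda_h}$ and then bound $e^{\pm(\,\cdot\,)}$ by its Taylor polynomial; this reaches the same conclusion with the same $h=1$ caveat but needs the third-order Stirling remainder to be tight enough.
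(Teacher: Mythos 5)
Your proof is correct, and it is worth noting that the paper itself offers no proof of Proposition~\ref{prop2}: the bounds are simply quoted as a known sharpening of the Stirling/Wallis estimate for the central binomial coefficient (they are used only once, to lower-bound $|\mathcal{S}_R(n)|$). So there is no ``paper route'' to compare against; your argument supplies the missing justification. I checked the details: the reduction to $1-\tfrac{1}{8h}\le \sqrt{\pi h}\,b_h\le 1-\tfrac{1}{9h}$ with $b_h=\binom{2h}{h}/4^h$ and $b_{h+1}/b_h=\tfrac{2h+1}{2h+2}$ is right; the identity $\prod_{j=1}^h\bigl(1-\tfrac{1}{4j^2}\bigr)=(2h+1)b_h^2$ together with Wallis gives the limit $\sqrt{\pi h}\,b_h\to 1$; and the two squared ratio inequalities do collapse exactly as you claim, since $(h+1)\bigl[(2h+1)(8h-1)\bigr]^2 = 256h^5+448h^4+196h^3-8h^2-11h+1$ against $4h^3(8h+7)^2=256h^5+448h^4+196h^3$, and $(h+1)\bigl[(2h+1)(9h-1)\bigr]^2-4h^3(9h+8)^2=9h^3-h^2-13h+1$. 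Your handling of the single failure of the cubic at $h=1$ via the base case $\pi<256/81$ is also correct and genuinely necessary. The monotone-ratio device (a sequence monotone toward a known limit is bounded by that limit) is a clean way to get both inequalities for all $h\ge 1$ rather than only asymptotically, which is exactly what the paper's finite-$n$ redundancy bound requires.
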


\section{Reconstruction Codes} \label{sec:recons}

In what follows, we describe a family of efficiently encodable and decodable reconstruction codes that map strings of any length $k$ into strings of length $n \leq k+  1/2 \log\,(k) +6$. 

Using $C_1$ and recalling that $\sigma_i = \text{wt}(s_i, s_{n+1-i})$, we have $\sum^{n/2}_{j=1} \sigma_j = w_1.$
When $i=2$, the bits at positions $1,n$ contribute once to $w_2$, whereas the bits $2, \dots, n-1$ all contribute twice to $w_2$. Using $C_2,$ we hence get $\sigma_1 + 2 \sum^{ n/2}_{j=2} \sigma_j = w_2.$ 
Generalizing for all $C_i, i\leq n/2$, we have 
\begin{equation} \label{eq:sigmas}
\frac{1}{i} \sigma_1 + \frac{2}{i} \sigma_2 + \dots + \frac{i-1}{i} \sigma_{i-1} +  \sigma_i + \sigma_{i+1} + \dots +  \sigma_{n/2} = \frac{1}{i} w_{i}. 
\end{equation}
This gives a system of $n/2$ linear equations with $n/2$ unknowns that can be solved efficiently. Thus, for all error-free composition sets, one can find $\Sigma^{n/2}$. Therefore, the problem of interest is to determine $\textbf{s}$ provided $\Sigma^{n/2}$ and $C(\textbf{s})$. 
~\cite[Lemma~31]{acharya2014string} asserts that when $\text{wt}(\textbf{s}_1^i) \not = \text{wt}(\textbf{s}^{n}_{n +1 -i})$, then $C(\textbf{s}), \textbf{s}_1^i,$ and $\textbf{s}_{n-i+1}^n$ determine the ordered pair $(s_{i+1},s_{n-i})$. 

The previous lemma will be used to guide our construction of reconstructible code based on Catalan-type strings. We proceed as follows. Let $I \subseteq [n]$. The string formed by concatenating bits at positions in $I$ in-order is denoted by $\textbf{s}|_{I}$. To construct a string $\textbf{s}$ of a reconstruction code $\mathcal{S}_R(n)$ of even length $n$ we proceed as follows.

\begin{align}
    \mathcal{S}_R(n) = &\{ \textbf{s} \in \{ 0,1\}^{n}, s_1 =0, s_n =1, \label{set1} \\
 & \exists \; I \subseteq \{ 2, \dots, n-1\} \text{ such that} \nonumber \\ 
 & \qquad \qquad \qquad \quad \quad \quad \quad \text{ for all } i \in I, s_i \not = s_{n+1-i}, \nonumber\\
 & \qquad \qquad \qquad \quad \quad \quad \quad  \text{           } \text{for all }  i \not \in I, s_i = s_{n+1-i}, \nonumber \\
                        &\textbf{s}_{[n/2] \cap I}  \text{ is a Catalan-type string.}         \nonumber                \} \end{align}

For $n$ odd, we define the codebook as $\mathcal{S}_R(n)= \{ \textbf{s}_1^{n/2}\; 0 \; \textbf{s}_{n/2+1}^{n}, \; \textbf{s}_1^{n/2} \; 1 \, \textbf{s}_{n/2+1}^{n}, \textbf{s} \in  \mathcal{S}_R(n-1) \} $. 

The following proposition is an immediate consequence of the construction described above.
\begin{lemma} \label{lem1}
Consider a string $\mathbf{s}\in \mathcal{S}_R(n)$. For all prefix-suffix pairs of length $ 1\leq  j \leq n/2$, one has $\emph{wt}(\textbf{s}_1^j) \not = \emph{wt}(\textbf{s}^{n}_{n +1 -j})$.
\end{lemma}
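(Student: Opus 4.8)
The plan is to unwind the definition of $\mathcal{S}_R(n)$ and track the running weight difference between a prefix and the matching suffix as we scan inward. Fix $\mathbf{s}\in\mathcal{S}_R(n)$ with its associated index set $I\subseteq\{2,\dots,n-1\}$ as in \eqref{set1}. For $1\le j\le n/2$ define $d_j=\mathrm{wt}(\mathbf{s}_1^j)-\mathrm{wt}(\mathbf{s}_{n+1-j}^n)=\sum_{i=1}^{j}\bigl(s_i-s_{n+1-i}\bigr)$. The claim is exactly that $d_j\neq 0$ for every such $j$. The point of the construction is that only the positions in $I$ affect $d_j$: for $i\notin I$ we have $s_i=s_{n+1-i}$ so that term contributes $0$, while for $i\in I$ we have $s_i\neq s_{n+1-i}$, contributing $\pm 1$. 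Hence $d_j=\sum_{i\in I,\ i\le j}(s_i-s_{n+1-i})$, and the sign of the $i$-th nonzero increment is $+1$ if $(s_i,s_{n+1-i})=(1,0)$ and $-1$ if $(s_i,s_{n+1-i})=(0,1)$.

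Next I would translate the Catalan-type condition on $\mathbf{s}|_{[n/2]\cap I}$ into a statement about these partial sums. Reading the bits $s_i$ for $i\in[n/2]\cap I$ in increasing order of $i$ gives precisely the Catalan-type string $\mathbf{s}|_{[n/2]\cap I}$, every prefix of which has strictly more $0$s than $1$s. So if $i_1<i_2<\cdots<i_m$ are the elements of $[n/2]\cap I$, then for every $t\le m$ the count $\#\{r\le t: s_{i_r}=0\}>\#\{r\le t: s_{i_r}=1\}$. Since $s_{i_r}=0$ forces $s_{n+1-i_r}=1$ (contributing $-1$ to $d$) and $s_{i_r}=1$ forces $s_{n+1-i_r}=0$ (contributing $+1$), the partial sum of increments after the first $t$ elements of $I$ equals $\#\{r\le t: s_{i_r}=1\}-\#\{r\le t: s_{i_r}=0\}<0$. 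Thus $d_j<0$ strictly for all $j$ in the range where at least one index of $I$ has been encountered; and for $j$ below $\min I$ one has $d_j=0$\,— so I need to handle the boundary. This is where $s_1=0,\ s_n=1$ comes in: at $j=1$ we get $d_1 = s_1 - s_n = 0-1 = -1\neq 0$ directly, and position $1$ is effectively a ``virtual'' $0$ that starts the Catalan path strictly below zero, so in fact $d_j\le -1$ throughout. I would phrase this cleanly by noting $1\notin I$ (since $I\subseteq\{2,\dots,n-1\}$) yet $s_1\neq s_n$, so the $j=1$ term already gives $d_1=-1$, and every subsequent increment from $I$ keeps the partial sum $\le -1$ by the Catalan prefix inequality applied to the string $0\cdot(\mathbf{s}|_{[n/2]\cap I})$.

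For odd $n$ the codebook is built by inserting a middle bit into a string of $\mathcal{S}_R(n-1)$; since the prefix-suffix pairs of length $j\le (n-1)/2=\lfloor n/2\rfloor$ are unaffected by the inserted middle coordinate, the even-$n$ argument transfers verbatim. I expect the main obstacle to be purely bookkeeping rather than conceptual: carefully matching the index conventions (the ceiling/floor suppression announced in the text, the identification of $\mathbf{s}|_{[n/2]\cap I}$ with a genuine Catalan-type string, and the off-by-one at $j=1$) so that the strict inequality $d_j\le -1$ is justified for \emph{every} $j$ in $\{1,\dots,\lfloor n/2\rfloor\}$, including the initial segment before any index of $I$ appears. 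Once the reduction to ``partial sums of a Catalan path stay strictly negative'' is set up correctly, the conclusion $\mathrm{wt}(\mathbf{s}_1^j)\neq\mathrm{wt}(\mathbf{s}_{n+1-j}^n)$ is immediate.
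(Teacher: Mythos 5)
Your proof is correct and is exactly the argument the paper leaves implicit (the paper states Lemma~\ref{lem1} as an ``immediate consequence of the construction'' with no written proof): the difference $d_j=\mathrm{wt}(\mathbf{s}_1^j)-\mathrm{wt}(\mathbf{s}_{n+1-j}^n)$ is the height after $j$ steps of the path obtained by prepending the $0$ at position $1$ to the Catalan-type string $\mathbf{s}|_{[n/2]\cap I}$, and the strict prefix-majority of $0$s keeps this height at most $-1$. Your treatment of the boundary $j<\min I$ via $s_1=0$, $s_n=1$, and of the inserted middle bit for odd $n$, is also correct.
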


The encoding algorithm that accompanies our reconstruction codebook can be easily implemented using efficient rankings of Catalan strings and symmetric strings that are ordered lexicographically.

The proof of Theorem~1 follows from the fact that $\mathcal{S}_R(n)$ is a reconstruction code, which may be easily established from the guarantees for the backtracking algorithm and~Lemma \ref{lem1}.

The size of $\mathcal{S}_R(n)$ may be simply bounded as: 
\begin{align*}
    |\mathcal{S}_R(n)| \geq \frac{1}{2}\sum_{i=0}^{(n-2)/2} {\frac{n-2}{2} \choose i} 2^{\frac{n-2}{2} -i} {i \choose  \frac{i}{2}} %\label{eq:1}\\
    %& \geq  \frac{3}{8} \sum_{i=1}^{n-2/2} {\frac{n-2}{2} \choose i} 2^{\frac{n-2}{2} -i} \frac{1}{\sqrt{2 \pi i}} 2^i + 2^{\frac{n-2}{2}} \label{eq:size2}  \\
    %& \geq \frac{1}{\sqrt{2 \pi n-2}} \frac{3}{8} 2^{\frac{n-2}{2}} \sum_{i=1}^{n-2/2} {\frac{n-2}{2} \choose i}  + 2^{\frac{n-2}{2}} \label{eqsize3}\\
    %& \geq \frac{1}{\sqrt{2 \pi n-2}} \frac{3}{8} 2^{\frac{n-2}{2}} \sum_{i=0}^{n-2/2} {\frac{n-2}{2} \choose i} \label{eqsize4} \\
     \geq \frac{3 \, 2^{n-5} }{\sqrt{2 \pi (n-2)}} \, . %\label{eq:2}
\end{align*}
The first inequality follows from the description of the codebook, while the second follows from Proposition \ref{prop2} and the binomial theorem. As $2^k \leq |\mathcal{S}_R(n)|$, simple algebraic manipulation reveals that the redundancy of the reconstruction code for information lengths $k$ is at most 
$1/2  \log\,(k) +6$.   
%\end{proof}

\section{Error-Correcting Reconstruction Codes} \label{sec:step1}

Our single composition error-correcting codes use the same interleaving procedure described in the previous section, but require adding a constant number of redundant bits. 
In particular, let $\mathcal{S}_R(n-2)$ be the code of odd length $n-2$ described in the previous section. Then, a single composition error-correcting code $\mathcal{S}_C(n)$ is constructed by adding two bits to each string in $\mathcal{S}_R(n-2)$ and subsequently fixing the value of one additional bit. These three redundant bits allow us to uniquely recover the set $\Sigma^{ n/2 }$ in the presence of a single composition error. Consequently, Lemma~\ref{lem:3} can be used to show that given $\Sigma^{ n/2 }$ and the erroneous composition set of $\textbf{s}$, one can reconstruct $\textbf{s}$. 

To prove Theorem~2, let $C'$ denote the set obtained by introducing a single error in the composition set $C(\textbf{s})$ of a string $\textbf{s}$. Furthermore, let $w'_j$ denote the cumulative weight of compositions in $C'_j$, and recall that $w_j$ stands for the cumulative weight of compositions in $C$, such that $w_j = w_{n-j+1}$. It is straightforward to prove the following proposition.

\begin{proposition} Let $j \in [n]$. Then, 
$$j w_1 -   \sum_{i=1}^{j-1} i \, \sigma_{j-i} - 2 \leq  w_j \leq j w_1 -   \sum_{i=1}^{j-1} i \, \sigma_{j-i} .$$
\end{proposition}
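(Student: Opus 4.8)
The plan is to derive the claimed two-sided bound on $w_j$ directly from the identity \eqref{eq:sigmas} that relates the cumulative weights $w_i$ to the symmetric sums $\sigma_1,\dots,\sigma_{n/2}$, and then to quantify how a single composition error can perturb the left-hand side of that identity. Recall that in the error-free case, multiplying \eqref{eq:sigmas} through by $i$ gives
\[
w_i = \sigma_1 + 2\sigma_2 + \cdots + (i-1)\sigma_{i-1} + i\bigl(\sigma_i + \sigma_{i+1} + \cdots + \sigma_{n/2}\bigr).
\]
Using $w_1 = \sum_{j=1}^{n/2}\sigma_j$, the tail $i(\sigma_i + \cdots + \sigma_{n/2})$ can be rewritten as $i w_1 - i(\sigma_1 + \cdots + \sigma_{i-1})$, so that $w_i = i w_1 - \sum_{m=1}^{i-1}(i-m)\sigma_m = i w_1 - \sum_{t=1}^{i-1} t\,\sigma_{i-t}$ after reindexing $t = i-m$. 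Thus in the error-free case the upper bound $w_j \le j w_1 - \sum_{t=1}^{j-1} t\,\sigma_{j-t}$ holds with equality. First I would carry out this algebraic rearrangement carefully, since it is the backbone of the statement; it is a routine reindexing but must be pinned down to get the coefficient $t$ on $\sigma_{j-t}$ exactly right.

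Next I would analyze the effect of a single composition error. A single insertion-deletion pair replaces one composition $0^{z}1^{w}$ in some $C_\ell$ by a different composition $0^{z'}1^{w'}$ with $z'+w' = \ell = z+w$; hence it changes exactly one cumulative weight, say $w_\ell$ becomes $w'_\ell = w_\ell + (w' - w)$, while all other cumulative weights are unchanged. Since $0 \le w, w' \le \ell \le n$ but more tightly since the two compositions differ only by moving mass between two adjacent substrings' worth of bits — actually the cleaner bound is that $w'$ and $w$ are both between $0$ and $\ell$, so $|w'-w|$ could be large; the key observation that makes the constant $2$ work is that the set $\Sigma^{n/2}$ is recovered (via the redundant bits) correctly, and the quantity $j w_1 - \sum_{t=1}^{j-1} t\,\sigma_{j-t}$ is therefore computed from error-free data. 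The perturbation enters only through $w_j$ itself on the left. So I would argue: the true $w_j$ equals $j w_1 - \sum_{t=1}^{j-1} t\,\sigma_{j-t}$ exactly; the observed $w'_j$ (the thing we actually have access to, here just called $w_j$ in the statement) differs from it by the error term, which I must bound by $2$ in absolute value — or, more precisely, show $-2 \le w'_j - (\text{error-free value}) \le 0$.

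The main obstacle — and the step I would spend the most care on — is justifying that the one-sided slack is exactly $2$ and only on the lower side. I would trace through the structure of a single composition error on a length-$\ell$ substring: the error changes $w'_\ell$ relative to $w_\ell$, but $w_j$ for the specific index $j$ in the proposition is affected only if the corrupted composition lies in $C_j$ (or, after using $w_j = w_{n-j+1}$, in $C_{n-j+1}$). Because strings of even length have the detection property $w_\ell = w_{n+1-\ell}$ in the error-free case, a single error breaks this in exactly one of the two symmetric slots, and the magnitude of the discrepancy in $w_j$ is controlled by how much a single substring's weight can shift — which, for a composition error that is "small" in the deletion-insertion sense, is at most $1$ per affected slot, giving $2$ across the symmetric pair when one folds $w_{n-j+1}$ into $w_j$. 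I would make this precise by writing $w_j$ in terms of both $C_j$ and $C_{n-j+1}$ and checking the two cases (error in $C_j$ versus error in $C_{n-j+1}$) separately; in each case at most one unit of weight is lost from the idealized count, hence the $-2$ lower bound and the $0$ upper bound. The remaining details are bookkeeping and I would not grind through them here.
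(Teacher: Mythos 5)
The paper offers no proof of this proposition (it is declared ``straightforward''), so your proposal can only be judged on its own merits. The first half of your argument is correct and, importantly, it already finishes the job: multiplying \eqref{eq:sigmas} by $j$ and substituting $w_1=\sum_{m}\sigma_m$ gives the exact identity $w_j = j w_1 - \sum_{t=1}^{j-1} t\,\sigma_{j-t}$ for $j\le \lceil n/2\rceil$, which is the stated upper bound with equality; the lower bound then holds trivially because $w_j\ge w_j-2$. (The slack of $2$ is what makes the subsequent mod-$3$ argument of Proposition~\ref{prop:cor} work; the only place it is genuinely needed is to absorb contributions from positions near the centre of the string --- for odd $n$ and $j$ slightly exceeding $\lceil n/2\rceil$ the middle bit is over-counted by $j-\lceil n/2\rceil$ --- and via $w_j=w_{n+1-j}$ the paper only ever invokes the bound for $j\le\lceil n/2\rceil$, where the identity is exact.)

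The second half of your proposal, however, rests on a misreading of the statement and would not go through. The proposition concerns the \emph{error-free} cumulative weight $w_j$: the paper introduces $w'_j$ for the corrupted quantity and deliberately states this bound for the unprimed $w_j$, so no perturbation analysis is required at all. Worse, the analysis you sketch cannot produce the claimed one-sided window: a single composition error shifts the affected cumulative weight by $+1$ just as readily as by $-1$ (cf.\ the proof of Proposition~\ref{prop:w1}, which uses only $|w'_1-w_1|\le 1$), so your route would at best yield an interval of the form $[U-1,U+1]$ rather than $[U-2,U]$, and you explicitly leave unresolved the step where you must show the slack is ``exactly $2$ and only on the lower side'' --- that step is unresolvable because it is not true of the quantity you are bounding. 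You should delete that half entirely and present the identity $w_j=jw_1-\sum_{t=1}^{j-1}t\,\sigma_{j-t}$ as the proof, making explicit the range of $j$ (or the convention $\sigma_m=\sigma_{n+1-m}$ for $m>\lceil n/2\rceil$) under which it holds.
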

This result immediately implies the next proposition.
\begin{proposition}\label{prop:cor} Let $j \in [n]$ and suppose that we are given $w_1, \sigma_1, \ldots, \sigma_{j-1}$. Then, the value $w_j \bmod 3$ uniquely determines $w_j$.
\end{proposition}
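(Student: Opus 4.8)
The plan is to deduce Proposition~\ref{prop:cor} directly from the preceding Proposition as a corollary, exactly as the phrasing ``This result immediately implies'' suggests. First I would set $A_j \define j w_1 - \sum_{i=1}^{j-1} i\,\sigma_{j-i}$, which is the quantity fully determined by the given data $w_1,\sigma_1,\ldots,\sigma_{j-1}$. The preceding Proposition states that $A_j - 2 \le w_j \le A_j$, so $w_j$ lies in the set of three consecutive integers $\{A_j - 2, A_j - 1, A_j\}$. Any three consecutive integers are pairwise incongruent modulo $3$ (they form a complete residue system mod $3$), so knowing $w_j \bmod 3$ singles out exactly one element of this set. Concretely, $w_j$ is the unique element of $\{A_j-2, A_j-1, A_j\}$ congruent to $w_j \bmod 3$, which is computable from the given data once the residue is supplied.

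The only thing that needs a word of justification is why the three candidate values really are distinct integers, i.e.\ that the interval genuinely has width $2$ and not less; but this is immediate since $A_j$, $A_j - 1$, $A_j - 2$ are literally three distinct integers regardless of the value of $A_j$, and $w_j$ being an integer in the closed interval $[A_j-2, A_j]$ forces $w_j \in \{A_j-2,A_j-1,A_j\}$. I would also remark that $w_j \bmod 3$ is information one can hope to transmit cheaply: it is a single element of $\{0,1,2\}$, i.e.\ costs less than two bits, which is why this proposition is the engine behind the constant-redundancy claim of Theorem~2 — redundant bits will be used to convey $w_j \bmod 3$ (or an equivalent check value) so that the decoder, which can reconstruct $w_1$ and the $\sigma_i$'s up to the error, can then pin down each $w_j$ and hence detect and locate the corrupted composition.

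There is essentially no obstacle here: the argument is a one-line pigeonhole/CRT-style observation once the two-sided bound from the previous Proposition is in hand. If anything, the ``main'' subtlety is purely bookkeeping — making sure the indexing in $\sum_{i=1}^{j-1} i\,\sigma_{j-i}$ matches the convention established in~\eqref{eq:sigmas} and that $j$ ranges over $[n]$ rather than just $[n/2]$, so that the recursion on $w_j$ is well-defined using only the lower-indexed $\sigma$'s. I would state the proof in two or three sentences and move on, since the real content lives in the preceding Proposition (the $-2$ slack coming from the fact that the two extreme bits $s_1$ and $s_n$ each contribute to $w_j$ with multiplicity depending on whether the window of length $j$ can be placed flush against both ends), and in how $\Sigma^{n/2}$ and $w_1$ are recovered in the presence of the error, which is handled by the added redundant bits and Lemma~\ref{lem:3}.
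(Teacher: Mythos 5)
Your proof is correct and is exactly the argument the paper intends: the preceding proposition confines $w_j$ to the three consecutive integers $\{A_j-2, A_j-1, A_j\}$ with $A_j = j w_1 - \sum_{i=1}^{j-1} i\,\sigma_{j-i}$ computable from the given data, and these form a complete residue system modulo $3$, so the residue $w_j \bmod 3$ pins down $w_j$. The paper leaves this as ``immediately implied'' without spelling it out, so your write-up matches (and slightly elaborates) the intended one-line pigeonhole argument.
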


We also need the following three propositions. 

\begin{proposition}\label{prop:w1} Given $\emph{wt}(\textbf{s}) \bmod 2,$ $w_n'$ and $w_1'$, one can recover $w_1$. 
\end{proposition}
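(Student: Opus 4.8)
The plan is to leverage the identity $w_1 = w_n = \text{wt}(\textbf{s})$ (already noted in the problem statement, since both equal the number of $1$s in $\textbf{s}$) together with the structure of the single–composition–error model, in which exactly one composition of some level $C_\ell$ is replaced by a different composition \emph{of the same length} $\ell$ (as in the worked corruption that replaces $0^11^1$ by $0^2$ within $C_2$). Since $n \geq 2$, the levels $C_1$ and $C_n$ are distinct, so the error touches at most one of them; hence at most one of $w_1'$, $w_n'$ differs from $\text{wt}(\textbf{s})$.

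First I would analyze an error sitting in $C_1$. Each composition in $C_1$ has length one and is therefore either $0^1$ or $1^1$, so replacing one such composition by a different one changes $w_1$ by exactly $\pm 1$; thus $w_1' = w_1 \pm 1$, which in particular forces $w_1' \not\equiv \text{wt}(\textbf{s}) \pmod 2$, while $C_n$ is untouched and $w_n' = w_n = w_1$. In every remaining case — the error lies in $C_n$, or in some $C_\ell$ with $2 \le \ell \le n-1$, or there is no error at all — the level $C_1$ is intact, so $w_1' = w_1 = \text{wt}(\textbf{s})$ and in particular $w_1' \equiv \text{wt}(\textbf{s}) \pmod 2$.

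Combining the two observations yields the decoder: compare the parity of the received $w_1'$ with the given value $\text{wt}(\textbf{s}) \bmod 2$; if they agree, then $C_1$ is uncorrupted and $w_1 = w_1'$, and if they disagree, then $C_1$ is the corrupted level, so $C_n$ is reliable and $w_1 = w_n'$. The argument is short, and the only points that need care are pinning down the error model (a corrupted composition keeps its length, so $C_1$ and $C_n$ cannot both be affected) and the elementary but essential remark that any error internal to $C_1$ necessarily flips the parity of its cumulative weight; beyond these there is no real obstacle.
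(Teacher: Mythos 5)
Your proposal is correct and follows essentially the same route as the paper: both rest on the observations that a single composition error can touch at most one of $C_1$ and $C_n$, that an error inside $C_1$ shifts $w_1$ by at most one (hence flips its parity), and that comparing the parity of $w_1'$ against the stored $\text{wt}(\textbf{s}) \bmod 2$ tells you whether to trust $w_1'$ or fall back on $w_n'$. The only cosmetic difference is that the paper first dispatches the case $w_1' = w_n'$ before invoking the parity test, whereas you run the parity test uniformly; both decoders are valid.
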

\begin{proof} If $w_n' = w_1'$, then clearly $w_1=w_n' = w_1'$. Hence, suppose that $w_n' \neq w_1'$ and observe that $|w_1' - w_1| \leq 1$. 
The last inequality follows since at most one composition error is allowed. If $w'_1 \bmod 2 = \text{wt}(\textbf{s}) \bmod 2$, then $w_1 =  w'_1$; 
otherwise, $w_1 = w_n'$.
\end{proof}

\begin{proposition}\label{prop:change} Suppose that $n$ is odd and that either $ \lceil \frac{n}{2} \rceil + 1$ or $ \lceil \frac{n}{2} \rceil$ is divisible by $3$. Assume that $\textbf{s} = s_1 \, \ldots \, s_{ \lceil \frac{n}{2} \rceil} \, \ldots s_n,$ and let $\textbf{s}' = s_1 \, \ldots \, 1-s_{\lceil \frac{n}{2} \rceil} \, \ldots \, s_n$. Then,
\begin{align*}
\sum_{i=1}^{ \lceil \frac{n}{2} \rceil} w_i(\textbf{s}) \equiv \sum_{i=1}^{\lceil \frac{n}{2} \rceil} w_i(\textbf{s}') \bmod 3.
\end{align*}
\end{proposition}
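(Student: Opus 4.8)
The plan is to track exactly how flipping the middle bit $s_{\lceil n/2 \rceil}$ perturbs each cumulative weight $w_i$, to add up these perturbations for $i = 1, \ldots, \lceil n/2 \rceil$, and to observe that the total is a triangular number whose divisibility by $3$ is governed precisely by the stated hypothesis. Write $m = \lceil \frac{n}{2} \rceil$; since $n$ is odd we have $n = 2m-1$, and the flipped coordinate is $s_m$, with $\textbf{s}'$ agreeing with $\textbf{s}$ everywhere except there.

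First I would recall that $w_i(\textbf{s})$ is the sum of $\text{wt}(\textbf{s}_j^{j+i-1})$ over all starting positions $j$ with $1 \le j \le n-i+1$. Flipping $s_m$ changes $\text{wt}(\textbf{s}_j^{j+i-1})$ by $s'_m - s_m = 1 - 2 s_m =: \epsilon \in \{-1,+1\}$ exactly when the window $[j,\, j+i-1]$ contains the index $m$, and leaves the weight unchanged otherwise. Hence $w_i(\textbf{s}') - w_i(\textbf{s}) = \epsilon \, N_i$, where $N_i$ denotes the number of length-$i$ substrings of $\textbf{s}$ covering position $m$. Next I would compute $N_i$ for $1 \le i \le m$: a length-$i$ window starting at $j$ covers $m$ iff $m-i+1 \le j \le m$, subject to the admissibility constraints $1 \le j$ and $j \le n-i+1 = 2m-i$. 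For $i \le m$ both constraints are slack, since $m-i+1 \ge 1$ and $m \le 2m-i$, so $N_i = m - (m-i+1) + 1 = i$. Summing over the relevant range gives
\[
\sum_{i=1}^{m} w_i(\textbf{s}') - \sum_{i=1}^{m} w_i(\textbf{s}) \;=\; \epsilon \sum_{i=1}^{m} i \;=\; \epsilon\,\frac{m(m+1)}{2}.
\]

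Finally I would verify the small number-theoretic equivalence: $3$ divides $\frac{m(m+1)}{2}$ if and only if $3 \mid m$ or $3 \mid (m+1)$. Indeed $m(m+1)$ is even, so $\frac{m(m+1)}{2}$ is an integer, and since $2$ is invertible modulo $3$, divisibility of $\frac{m(m+1)}{2}$ by $3$ is equivalent to $3 \mid m(m+1)$, i.e. to $m \equiv 0$ or $m \equiv 2 \bmod 3$. This is exactly the assumed condition that $\lceil \frac{n}{2} \rceil$ or $\lceil \frac{n}{2} \rceil + 1$ is a multiple of $3$; hence the right-hand side of the displayed equation vanishes modulo $3$, which is the claim.

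There is no substantive obstacle here — the argument is a direct count followed by a divisibility check. The only points that demand a little care are (i) confirming that $N_i = i$ with no boundary exception throughout the whole range $1 \le i \le m$ (this is where the oddness of $n$, hence $n = 2m-1$, is used, and I would double-check the endpoint $i = m$, where the window is pinned to start in $1 \le j \le m$ and indeed $N_m = m$), and (ii) the equivalence between the hypothesis on $m$ and $3 \mid \frac{m(m+1)}{2}$.
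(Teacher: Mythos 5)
Your proof is correct and follows essentially the same route as the paper's: both arguments count that the middle bit contributes $i$ to $w_i$ for each $i \le \lceil n/2 \rceil$, so flipping it shifts the sum by $\pm \frac{\lceil n/2\rceil(\lceil n/2\rceil+1)}{2}$, and then invoke the divisibility hypothesis. You simply make explicit the window-counting ($N_i = i$) and the sign $\epsilon$ that the paper handles by a WLOG assumption $s_{\lceil n/2\rceil}=1$.
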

\begin{proof} Suppose that $s_{\lceil \frac{n}{2} \rceil}=1$. Then, the bit $s_{ \lceil \frac{n}{2} \rceil}$ contributes 
$ \lceil \frac{n}{2} \rceil$ to $w_{ \lceil \frac{n}{2} \rceil}$ and $ \lceil \frac{n}{2}  \rceil-1$ to $w_{ \lceil \frac{n}{2}\rceil -1}$. In summary, 
if $s_{\lceil \frac{n}{2} \rceil}=1$, then 
$$ \sum_{i=1}^{ \lceil \frac{n}{2} \rceil} w_i(\textbf{s}) = \sum_{i=1}^{ \lceil \frac{n}{2} \rceil} w_i(\textbf{s}') + \frac{ \lceil \frac{n}{2} \rceil  \, ( \lceil \frac{n}{2} \rceil +1)}{2}.$$
The result follows if either $ \lceil \frac{n}{2} \rceil + 1$ or $ \lceil \frac{n}{2} \rceil $ is divisible by $3$.
\end{proof}

\begin{proposition}\label{prop:change2} For odd $n,$ if $s_1 \, \ldots \, s_{ \lceil \frac{n}{2} \rceil} \, \ldots \, s_n \in \mathcal{S}_R(n)$, 
then $s_1 \, \ldots \, 1-s_{ \lceil \frac{n}{2} \rceil} \, \ldots \, s_n \in \mathcal{S}_R(n)$.
\end{proposition}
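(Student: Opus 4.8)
The plan is to simply unwind the definition of $\mathcal{S}_R(n)$ for odd $n$ and observe that the central bit is a ``free'' insertion. Recall that for odd $n$ the codebook is built from the even-length code $\mathcal{S}_R(n-1)$ as
\[
\mathcal{S}_R(n)=\bigl\{\, \textbf{t}_1^{(n-1)/2}\, 0\, \textbf{t}_{(n+1)/2}^{n-1},\ \textbf{t}_1^{(n-1)/2}\, 1\, \textbf{t}_{(n+1)/2}^{n-1}\ :\ \textbf{t}\in \mathcal{S}_R(n-1)\,\bigr\},
\]
i.e., every codeword of odd length $n$ is obtained from some $\textbf{t}\in\mathcal{S}_R(n-1)$ by splicing either a $0$ or a $1$ into the central position $\lceil \tfrac{n}{2}\rceil=(n+1)/2$, and \emph{both} choices are included.

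First I would take an arbitrary $\textbf{s}=s_1\ldots s_{\lceil n/2\rceil}\ldots s_n\in\mathcal{S}_R(n)$ and extract the underlying length-$(n-1)$ string $\textbf{t}=s_1\ldots s_{(n-1)/2}\,s_{(n+3)/2}\ldots s_n$, so that $\textbf{t}\in\mathcal{S}_R(n-1)$ and $\textbf{s}$ is exactly $\textbf{t}$ with the bit $s_{\lceil n/2\rceil}$ inserted in the middle. By the displayed definition, inserting the complementary bit $1-s_{\lceil n/2\rceil}$ into this same $\textbf{t}$ at the same central position again produces a codeword of $\mathcal{S}_R(n)$; but that codeword is precisely $s_1\ldots (1-s_{\lceil n/2\rceil})\ldots s_n$. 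Hence the flipped string lies in $\mathcal{S}_R(n)$, which is the claim.

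The only point that warrants a line of justification — and the nearest thing to an obstacle, although it is routine — is that the central bit of an odd-length codeword genuinely does not enter the symmetry/Catalan-type constraints inherited from $\mathcal{S}_R(n-1)$. I would verify that under the middle insertion the reflection pairing of $\textbf{s}$ restricted to positions $i\le (n-1)/2$ matches the reflection pairing of $\textbf{t}$ (position $i$ of $\textbf{t}$ is paired with position $n-i$ of $\textbf{t}$ in both settings), while position $\lceil n/2\rceil$ is self-paired and therefore appears in none of the defining conditions \eqref{set1} of $\mathcal{S}_R(n-1)$. Consequently those conditions are insensitive to the value placed at $\lceil n/2\rceil$, the two displayed families are closed under flipping the central bit, and the proposition follows.
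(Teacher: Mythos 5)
Your proposal is correct and is exactly the argument the paper intends: the odd-length codebook is defined by inserting \emph{both} $0$ and $1$ at the central position of every codeword of $\mathcal{S}_R(n-1)$, so flipping the central bit maps the codebook to itself (the paper states the proposition without proof precisely because it is immediate from this definition). Your extra check that the central, self-paired position does not interact with the symmetry/Catalan constraints is harmless but not needed, since for odd $n$ membership is defined solely via the middle-deleted string lying in $\mathcal{S}_R(n-1)$.
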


Our code for odd $n$ is defined as follows (an almost identical construction is valid for even $n$):
\begin{align*}
\mathcal{S}_C(n) = \Big \{ &\textbf{s}=s_1\, s^*_1\, s_2 \ldots \, s_{ \lceil \frac{n-2}{2} \rceil} \, \ldots \, s_{n-3} \, s^*_n \, s_{n-2} \in \{0,1\}^n :  \\
&\ \ \ \ \ s_1 \, \ldots \, s_{n-2} \in \mathcal{S}_R(n-2), \text{wt}(\textbf{s}) \bmod 2 \equiv 0,   \\
&\ \ \ \ \ \sum_{i=1}^{ \frac{n}{2} } w_i(\textbf{s}) \equiv 0 \bmod 3, \, \text{ where } s^*_1 \leq s^*_n \Big \}. 
\end{align*}

The size of the code $\mathcal{S}_C(n)$ is $\frac{|\mathcal{S}_R(n-2)|}{2}$, which follows since we removed one information symbol from each coded string in $S_R(n-2)$ by requiring $\text{wt}(\textbf{s}) \bmod 2 \equiv 0,$ and then added two more redundant symbols. 
To construct a string in $\mathcal{S}_C(n)$, we first fix $s^*_1$ and $s^*_n$ so that $\sum_{i=1}^{ \lceil \frac{n}{2} \rceil} w_i(\textbf{s}) \equiv 0 \bmod 3$. Then, we choose $s_{ \lceil \frac{n-2}{2} \rceil}$ to satisfy $\text{wt}(\textbf{s})\equiv 0 \bmod 2 $. From Propositions~\ref{prop:change} and \ref{prop:change2}, the resulting string belongs to $\mathcal{S}_C(n)$. 

For the next lemma, recall that $C'(\textbf{s})$ is the result of a single composition error in $C(\textbf{s})$.

\begin{lemma} Suppose that $\textbf{s} \in \mathcal{S}_C(n)$. Then, given $C'(\textbf{s})$, one can recover $\Sigma^{ n/2 }$. \label{finding_sigma}
\end{lemma}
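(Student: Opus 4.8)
The plan is to recover $\Sigma^{\lceil n/2\rceil}$ by first pinning down the weight $w_1=\text{wt}(\textbf{s})$ and then peeling off $\sigma_1,\sigma_2,\dots$ one index at a time by back-substitution in the triangular system~\eqref{eq:sigmas}; the parity and mod-$3$ redundancy of $\mathcal{S}_C(n)$ are spent only at the one place where the single composition error actually obstructs this. Since~\eqref{eq:sigmas} recovers $\Sigma^{\lceil n/2\rceil}$ from the cumulative weights $w_1,\dots,w_{\lceil n/2\rceil}$, it suffices to recover these $\lceil n/2\rceil$ numbers. First I would obtain $w_1$: from $C'(\textbf{s})$ we read off $w_1'$ and $w_n'$, and since every codeword of $\mathcal{S}_C(n)$ has even weight, Proposition~\ref{prop:w1} returns $w_1$ exactly.

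Next I would localize the error. A single composition error changes $w_j'$ only for $j$ in a set $B$ with $|B|\le 2$. If the corrupted composition is replaced by one of a different length, the two affected lengths are exposed by the cardinality defects $|C_j'|\ne n-j+1$ (one too small at the length where a composition was lost, one too large where one was gained), so $B$ and the roles of its elements are known. If the replacement has the same length $\ell$, then $B=\{\ell\}$ and $\ell$ is the unique length with $w_\ell'\ne w_{n+1-\ell}'$, unless $\ell=\lceil n/2\rceil$ is the self-mirror length of an odd-length codeword, in which case the error leaves no footprint. Using the identities $w_j=w_{n+1-j}$, I claim that for each $j\le\lceil n/2\rceil$ the true value $w_j$ can be read off from $C'(\textbf{s})$ directly, with at most one exceptional index $j^{*}$: the exception arises only when both $j^{*}$ and $n+1-j^{*}$ are damaged (a cross-length error swapping a mirror pair), or when the damaged same-length pair $\{\ell,n+1-\ell\}$ is a mirror pair so the corrupted reading cannot be told from the correct one, or when $\ell=\lceil n/2\rceil$. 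A short case check shows at most one such $j^{*}$ occurs, every other $w_j$ with $j\le\lceil n/2\rceil$ being obtained as $w_j'$ or $w_{n+1-j}'$.

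Finally I would run the back-substitution. Processing $j=2,3,\dots,\lceil n/2\rceil$ in increasing order with $w_1,\sigma_1,\dots,\sigma_{j-2}$ already recovered, the Proposition preceding Proposition~\ref{prop:cor} confines $w_j$ to three consecutive integers, so $w_j$ — and hence $\sigma_{j-1}$ through~\eqref{eq:sigmas} — is determined once $w_j\bmod 3$ is known. For $j\ne j^{*}$ a reliable reading supplies $w_j$ outright. For $j=j^{*}$, all other $w_i$ with $i\le\lceil n/2\rceil$ are already in hand, so the codebook constraint $\sum_{i=1}^{\lceil n/2\rceil}w_i(\textbf{s})\equiv 0\bmod 3$ forces $w_{j^{*}}\bmod 3$, which pins $w_{j^{*}}$ down (cf.\ Proposition~\ref{prop:cor}). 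Back-substituting through~\eqref{eq:sigmas}, with $\sigma_{\lceil n/2\rceil}$ read off at the end from $\sum_i\sigma_i=w_1$, produces $\Sigma^{\lceil n/2\rceil}$.

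The delicate part is the localization step: one must verify, over all placements of the error (same-length versus cross-length; inside or outside a mirror pair; touching or missing the middle length when $n$ is odd), that at most one of $w_1,\dots,w_{\lceil n/2\rceil}$ escapes direct recovery, so that the single mod-$3$ check disambiguates it. One must also order the back-substitution so that, by the time $j^{*}$ is reached, the sigmas needed to invoke the three-consecutive-values window are available, and handle the interplay between the self-mirror length, the parity-fixing bit $s_{\lceil (n-2)/2\rceil}$, and Proposition~\ref{prop:change} in the odd-$n$ boundary cases (and their even-$n$ analogues). The Catalan-type structure inherited from $\mathcal{S}_R(n-2)$ plays no role here; it is only needed afterwards, for reconstructing $\textbf{s}$ from $\Sigma^{\lceil n/2\rceil}$.
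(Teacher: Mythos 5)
Your proof follows essentially the same route as the paper's: recover $w_1$ from the parity constraint via Proposition~\ref{prop:w1}, use the symmetry $w_j=w_{n+1-j}$ to conclude that at most one cumulative weight remains ambiguous, and resolve that one using the mod-$3$ codebook constraint together with the three-consecutive-values window of Proposition~\ref{prop:cor}, before inverting the triangular system (\ref{eq:sigmas}). The only substantive difference is that you additionally handle length-changing replacements (detected through multiset cardinalities), a case the paper's proof implicitly excludes by assuming the corrupted composition keeps its length.
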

\begin{proof} In order to prove the claim, we show that given $C'(\textbf{s}),$ one can recover $w_1, w_2, \ldots, w_n,$ which we know uniquely determine $\Sigma^{ n/2 }$ according to (\ref{eq:sigmas}). Let $j$ be such that $w'_j \neq w_{n+1-j}'$. Since at most one 
single composition error is allowed, there exists at most one such $j$. It is straightforward to see that due to symmetry, 
either $w'_j \neq w_j=w_{n+1-j}$ or $w'_{n+1-j} \neq w_j=w_{n+1-j}$. Since $\text{wt}(\textbf{s}) \bmod 2 \equiv 0$ by construction, 
it follows that we can determine $w_1$ based on Proposition~\ref{prop:w1}. Then, according to Proposition~\ref{prop:cor}, we can recover 
$w_j$ and all of $w_1, \ldots, w_n$. One case left to consider is when $w'_i = w_{n+1-i}'$ for all $i$. In this case, $w'_{ \frac{n}{2} } \neq w_{ \frac{n}{2} }$. Applying Proposition~\ref{prop:cor} allows us to determine $w_{ \frac{n}{2} }$ for this case as well, and this completes the proof.
\end{proof}

Next, let $\mathcal{T}_i$ be the set of compositions of all substrings $\textbf{s}_j^k$ for which $j<k\leq i,$ or $n+1-i \leq j<k,$ or $j \leq i \textit{ and } n+1-i \leq k$. 

\begin{lemma} \label{lem:3}
Let $\textbf{s} \in \mathcal{S}_C(n)$.
Given $C'(\textbf{s})$, one can uniquely reconstruct the string $\textbf{s}$. 
\end{lemma}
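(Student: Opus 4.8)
The plan is to reconstruct $\textbf{s}$ by first recovering the sets $\Sigma^{n/2}$ and then running a ``safe'' version of the backtracking algorithm that exploits the Catalan-type structure of $\mathcal{S}_R(n-2)$. By Lemma~\ref{finding_sigma}, from the erroneous set $C'(\textbf{s})$ we can recover $\Sigma^{n/2}$ exactly; equivalently, we know $w_1,\ldots,w_n$ and all the pairwise weights $\sigma_i=\text{wt}(s_i,s_{n+1-i})$. In particular we know $\text{wt}(\textbf{s}_1^i)-\text{wt}(\textbf{s}_{n+1-i}^n)$ for every $i$ up to a global sign, so for each $i$ we know whether the prefix and suffix of length $i$ have equal weight. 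The idea is then to peel off compositions from the outside in, one reflected pair $(s_i,s_{n+1-i})$ at a time, using $\Sigma^{n/2}$ together with the composition sets to pin down each new pair, exactly as in the error-free backtracking argument of~\cite[Lemma~31]{acharya2014string}, but now tracking the single erroneous composition.

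The key steps, in order: (i) identify and isolate the corrupted composition. Having recovered the true $w_1,\ldots,w_n$ from Lemma~\ref{finding_sigma}, compare them with the observed $w'_j$ to locate the unique length $\ell$ at which the error occurred; within $C'_\ell$ the erroneous composition is the one whose removal makes the multiset consistent with the recovered cumulative weight $w_\ell$ and with the partial reconstruction. (ii) Run backtracking from both ends. At step $i$ the algorithm has a candidate prefix $\textbf{s}_1^i$ and suffix $\textbf{s}_{n+1-i}^n$; it reads off $\text{wt}(\textbf{s}_1^{i+1})+\text{wt}(\textbf{s}_{n-i}^n)$ from the longest not-yet-used composition in the appropriate $C_\cdot$, and combines it with $\sigma_{i+1}$ from $\Sigma^{n/2}$ to solve for the unordered pair $\{\text{wt}(\textbf{s}_1^{i+1}),\text{wt}(\textbf{s}_{n-i}^n)\}$; since we can simultaneously read off $\text{wt}(\textbf{s}_1^{i+1})-\text{wt}(\textbf{s}_{n-i}^n)$ from $\Sigma^{n/2}$, the pair is actually ordered. (iii) Invoke the Catalan property. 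Because $s_1\ldots s_{n-2}\in\mathcal{S}_R(n-2)$ and the two added symbols $s^*_1\le s^*_n$ are at the outermost positions with known relative order, Lemma~\ref{lem1} guarantees $\text{wt}(\textbf{s}_1^j)\neq\text{wt}(\textbf{s}_{n+1-j}^n)$ for \emph{every} $j$, so at each step the ordered pair determines $(s_{i+1},s_{n-i})$ unambiguously and no backtracking occurs — the reconstruction is forced. (iv) Handle the error interaction: show that when the backtracking reaches the length $\ell$ at which the composition was corrupted, the recovered $w_\ell$ and $\Sigma^{n/2}$ let us substitute the correct composition before proceeding, so the single error never propagates; the central position $s_{\lceil (n-2)/2\rceil}$ is recovered last from $w_{n/2}$ and the parity constraint.

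The main obstacle I expect is step (iv): arguing that the single composition error in some $C_\ell$ does not corrupt the sequence of ``longest unused composition'' reads that drive the backtracking. In the error-free analysis of~\cite{acharya2014string}, each new prefix/suffix extension is read from a composition that is guaranteed to be present and correct; with one composition replaced, one must show that either (a) the corrupted composition is not among those ever read (because its weight is inconsistent with the forced partial reconstruction and can be excluded), or (b) if it would be read, the redundant recovery of $w_\ell$ via Proposition~\ref{prop:cor} together with $\Sigma^{n/2}$ overrides it. Making this rigorous requires a careful case analysis of where in the inside-out order the error sits relative to $\ell$ versus $n+1-\ell$, and using the fact that the non-erroneous half $C_{n+1-\ell}$ still carries the correct cumulative information by symmetry. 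Once that is handled, uniqueness of the output follows immediately from Lemma~\ref{lem1} and the determinism of the forced reconstruction, and efficiency follows because $\ell_s=0$ for strings in $\mathcal{S}_R$, so the backtracking bound $\mathcal{O}(2^{\ell_s}n^2\log n)=\mathcal{O}(n^2\log n)$ from~\cite[Theorem~32]{acharya2014string} applies.
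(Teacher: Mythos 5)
Your overall skeleton matches the paper's: recover $\Sigma^{n/2}$ via Lemma~\ref{finding_sigma}, then run backtracking guided by the $\sigma_i$'s. But the step you yourself flag as ``the main obstacle'' --- step (iv), showing that the single corrupted composition cannot derail or ambiguate the reconstruction --- is precisely the substantive content of the paper's proof, and your sketch of how to close it does not work. Knowing the true cumulative weight $w_\ell$ tells you only the weight discrepancy $w_\ell - w'_\ell$; it does not tell you \emph{which} composition in the multiset $C'_\ell$ is corrupted (any composition whose weight can be shifted by that discrepancy is a candidate), so you cannot simply ``substitute the correct composition before proceeding.'' Likewise, when the backtracking reaches the erroneous length and reads the two longest not-yet-accounted-for compositions, it faces two a priori plausible prefix--suffix extensions (e.g.\ $\{\textbf{s}_1^i\,0,\,1\,\textbf{s}_{n-i+1}^n\}$ versus $\{\textbf{s}_1^i\,1,\,0\,\textbf{s}_{n-i+1}^n\}$ when $\sigma_{i+1}=1$), and nothing in your argument rules out that the wrong branch completes to a full, internally consistent reconstruction of a \emph{different} codeword.

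The paper closes this gap with a minimum-distance argument: it shows that any two strings $\textbf{s}_1,\textbf{s}_2 \in \mathcal{S}_C(n)$ sharing the same $\Sigma^{n/2}$ satisfy $|C(\textbf{s}_1)\setminus C(\textbf{s}_2)|\ge 4$, proved by contradiction via a case analysis (on $\sigma_{i+2}\in\{0,1,2\}$) showing that the set equalities needed for two such strings to have composition multisets differing in only two elements can never hold. Since a single composition error perturbs the multiset by at most two elements, only one branch can be consistent with $C'$. This distance claim, together with the explicit construction of the multiset $\mathcal{T}_i$ of compositions forced by $\Sigma^{n/2}$ and the already-reconstructed prefix/suffix (used to split into the cases $\mathcal{T}_i\not\subset C'$ and $\mathcal{T}_i\subset C'$), is the essential ingredient missing from your proposal. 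A secondary, smaller issue: you invoke Lemma~\ref{lem1} for strings of $\mathcal{S}_C(n)$, but that lemma is stated for $\mathcal{S}_R$; one must still verify that inserting $s^*_1\le s^*_n$ at reflected positions and flipping the central bit preserves the prefix/suffix weight separation.
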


\begin{proof}
Let $j$ denote the index of the composition multi-set $C_j$ that contains an error. From Lemma~\ref{finding_sigma}, $\Sigma^{n/2}$ may be determined in an error-free manner. Using the obtained $\Sigma^{n/2}$, we run the backtracking algorithm and in the process, we may run into non-compatible compositions for $j> \frac{n}{2} $. For the case that backtracking halts for $j = n-i-1$, the currently reconstructed sub-strings are $\textbf{s}^i_1, \textbf{s}^n_{n+1-i}$. Without loss of generality, assume that $\sigma_{i+1}=1$ as otherwise one can fix the error easily. Furthermore, note that $\mathcal{T}_i$ can be constructed from $\Sigma^{n/2},\textbf{s}_1^i, \text{ and } \textbf{s}^{n}_{n +1 -i} $. 

One way in which incompatibility may manifest itself is through $\mathcal{T}_i \not \subset C'$, where $j = n-i-1$. In this case, we identify the element that is in $\mathcal{T}_i$ but not in $C'_j$, and add its weight to $w'_j$ and compare it with $w'_{n+1-j}$; this allows us to identify the erroneous composition. Next, suppose that $\mathcal{T}_i \subset C'$. In this case, consider the two longest compositions in $C' \setminus \mathcal{T}_i$. The two longest compositions in $C' \setminus \mathcal{T}_i$ are the compositions of a prefix-suffix pair of length $j$. 
%When the composition of a prefix (or a suffix) is drastically altered, the erroneous prefix (or a suffix) can be identified and the string can be reconstructed.
Since we have reconstructed the prefix and suffix of length $i$ and we know that $\sigma_{i+1} = 1$, there are two possibilities for compositions compatible with the prefix and two for the suffix of length $i+1$. Out of the six pairs of compositions that may be chosen from the four compositions, only two pairs cannot be directly eliminated as candidates for the correct composition. In this case, the following two prefix-suffix substrings are possible: $\{\textbf{s}_1^i \, 0 , 1\, \textbf{s}^{n}_{n-i+1} \}, \{\textbf{s}_1^i \, 1 , 0 \, \textbf{s}^{n}_{n-i+1} \}$. To show that only one of the constructed prefix-suffix pairs will be valid (compatible), it suffices to show the following: For any two strings $\textbf{s}_1, \textbf{s}_2 \in \mathcal{S}_C(n)$ that have the same $\Sigma^{n/2}$, $|C(\textbf{s}_1) \setminus C(\textbf{s}_2)| \geq 4$.

Let us assume that on the contrary, there are two strings $\textbf{s}, \textbf{t}$ such that $|C(\textbf{s}) \setminus C(\textbf{t})|=2$, and that they differ only in their respective $C_j$ sets (this condition is imposed by the Catalan strings, see Figure~\ref{fig:confusable_strings}). 

Since the prefixes and suffixes of the strings of length $i=n-j-1$ are identical, we let $s_1, \dots, s_i$ and $s_{n+1-i}, \dots, s_n$ denote the first and last $i$ bits of both strings. Let $c(\textbf{s})$ denote the composition of the string $\textbf{s}$. Furthermore, let $c(\textbf{s}_{l}^{l'})$ denote the composition of $\textbf{s}_{l}^{l'}$, $l\leq l'$.  

When $n=2(i+1)+1$, the strings differ in two compositions in $C_{n+1-i}$ due to the above observations. Note that they also differ in two compositions in their respective multisets $C_{i}$. 

When $n \geq 2(i+1) + 3$ and $\sigma_{i+2}=1$, we let $b_{s}$ stand for the $(i+2)^{\text{th}}$ bit in the string $\textbf{s}$, and $b_t$ stand for the $(i+2)^{\text{th}}$ bit of string $\textbf{t}$. When $\sigma_{i+2} \in  \{0, 2 \}$, we let $b$ denote the $(i+2)^{\text{th}}$ bits of the two strings, which are identical.
Next, we determine conditions under which $C_{j-1}(\textbf{s})= C_{j-1}(\textbf{t})$. Note that the compositions of substrings of length $n-i-2$ that contain the bits $i+1,\dots,n-i$ are identical for the two strings. 
%\vspace{-0.1in}
\begin{figure}[h]
\centering
\includegraphics[scale=0.29]{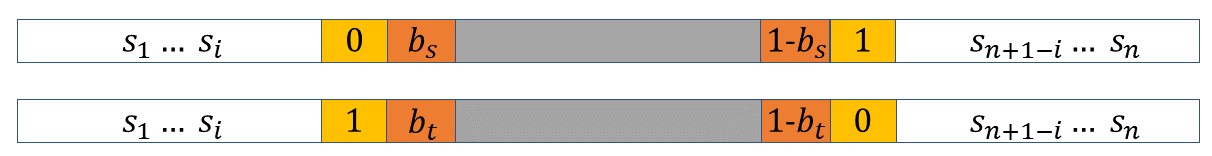} 
\caption{The figure depicts two strings $\textbf{s}, \textbf{t}$ satisfying the assumptions used in the proof.}
\label{fig:confusable_strings}
\vspace{-0.08in}
\end{figure}\\
\emph{Case 1}: $\sigma_{i+2} = 1$. With a slight abuse of notation, we choose to write compositions as sets containing both bits and other compositions. On the left-hand-side of the equation below, the compositions correspond to the substrings of $\textbf{s}$ of length $n-i-2$ that \emph{may} differ for the two strings. 
The right-hand-side of the equation corresponds to the same entities in $\textbf{t}$. If the equation holds, then the multisets $C_{j-1}(\textbf{s})$ and $C_{j-1}(\textbf{s})$ are equal.
\begin{equation*}
\begin{rcases}
    \begin{dcases}
\{{c(\textbf{s}_1^i), 0, b_s, c\}}, \\
\{{c(\textbf{s}_2^i), 0, b_s, c ,1-b_s\}}, \\
\{{c(\textbf{s}_{j+2}^n),1, 1-b_s, c\}},  \\
\{{c(\textbf{s}_{j+2}^{n-1}), 1, 1-b_s, c, b_s\}}
\end{dcases} 
\end{rcases}
=
\begin{rcases}
    \begin{dcases}
\{{c(\textbf{s}_1^i),1, b_t, c\}}, \\
\{{c(\textbf{s}_2^i), 1, b_t, c, 1-b_t\}}, \\
\{{c(\textbf{s}_{j+2}^n), 0, 1-b_t, c\}},  \\
\{{c(\textbf{s}_{j+2}^{n-1}), 0, 1-b_t, c, b_t\}}
	\end{dcases} 
\end{rcases} 
\end{equation*}
Due to space limitations, we omit the exhaustive case-by-case arguments that show that the above 
set equality is never true, independently on how $b_s$ and $b_t$ are chosen.

\emph{Case 2}: $\sigma_{i+2} \in \{0,2 \}$  Similar reasoning leads to a set equality condition 
in which $b_s$ and $b_t$ are replaced by $b$. Once again, it can be shown by an exhaustive 
case-by-case analysis that the set equality never holds, independently on the choice of $b$. 
This implies that the composition sets $C_{j-1}(\textbf{s})$ and $C_{j-1}(\textbf{t})$ differ, 
which in turn implies that the composition multisets of the two strings are at distance $\geq 4$. 
%This concludes the proof.
\end{proof}
The backtracking string reconstruction process based on an erroneous composition set is straightforward: It takes $\mathcal{O}(n^2)$ time to compute the $\mathcal{T}_k$ multiset, and backtracking performs $\mathcal{O}(n)$ steps. Thus, the decoding algorithm can computes the original string in $\mathcal{O}(n^3)$ time. 
\vspace{-0.1in}
\bibliography{biblio} 
\bibliographystyle{ieeetr}

\end{document}